   \documentclass[12pt,single-space,onecolumn]{IEEEtran}

\usepackage{amsmath,amssymb,mathrsfs,graphicx}
\usepackage{epsfig,subfigure}
\allowdisplaybreaks[2]
\newcommand{\bbr}{\mathbb R}

\newtheorem{theorem}{Theorem}[section]
\newtheorem{lemma}{Lemma}[section]

\newtheorem{proposition}{Proposition}[section]

\newtheorem{definition}{Definition}[section]

%
\ifCLASSINFOpdf
\else
\fi
\hyphenation{op-tical net-works semi-conduc-tor}

\begin{document}
%
\title{Cucker-Smale flocking with alternating leaders}
%
%
%

\author{Zhuchun Li  and Seung-Yeal Ha
\thanks{
Z. Li is with the Department of Mathematics, Harbin Institute of Technology, Harbin 150001,
 China  (e-mail: lizhuchun@gmail.com). S.-Y. Ha is with the Department of Mathematical Sciences,
Seoul National University, Seoul 151-747, Korea  (e-mail: syha@snu.ac.kr).
This work was carried out when Z. Li visited Seoul National University; he appreciated the support from PARC-SNU.
}
}

\maketitle

\begin{abstract}
We study the emergent flocking behavior in a group of Cucker-Smale flocking agents under rooted leadership with alternating leaders. It is well known that the network topology regulates the emergent  behaviors of flocks. All existing results on the Cucker-Smale model with leader-follower topologies assume a fixed leader during temporal evolution process. The rooted leadership is the most general topology taking a  leadership. Motivated by collective behaviors observed in the flocks of birds, swarming fishes and  potential engineering applications, we consider the rooted leadership with alternating leaders; that is, at each time slice there is a leader but it can be switched among the agents from time to time. We will provide several sufficient conditions leading to the asymptotic flocking among the Cucker-Smale agents under rooted leadership with alternating leaders.
\end{abstract}

\begin{IEEEkeywords}
Cucker-Smale agent, alternating leaders, rooted leadership, flocking
\end{IEEEkeywords}

%
\IEEEpeerreviewmaketitle

\section{Introduction}
The purpose of this paper is to study the emergent flocking phenomenon to the generalized Cucker-Smale (C-S)
 model with alternating leaders. Roughly speaking,
 the terminology ``flocking'' represents the phenomena that autonomous agents,
 using only limited environmental information,
organize into an ordered motion, e.g., flocking of birds, herds of cattle,  etc.  These  collective  motions have gained increasing interest from  the research communities
in biology, ecology, sociology and  engineering \cite{A-B-H-K-L,  C-S-D, J-L-M, L-P-L-S, P-E-G,  T-T, V-C-B-C-S} 
 due to their various applications in  sensor
networks, formation of robots and spacecrafts, financial markets and opinion formation
in social networks.

In \cite{C-S2, C-S1}, Cucker and Smale proposed a nonlinear second-order model to  study the emergent behavior of flocks. Let $x_i, v_i \in \bbr^d $ be the position and velocity of the $i$-th C-S agent, and $\psi_{ij}\geq 0$ be interaction weight between $j$ and $i$-th agents. Then, the discrete-time C-S model reads as
 \begin{align}
\begin{aligned} \label{c-s-1}
x_i(t+1)&=x_i(t)+hv_i(t), \quad  i=1,2,\dots,N,\\
 v_i(t+1)&=v_i(t)+h\sum\limits_{j=1}^N \psi_{ij}(x(t))\left[v_j(t)-v_i(t)\right],\\
  \psi_{ij}(x(t))&=\frac{1}{(1+|x_i(t)-x_j(t)|^2)^\beta}, \quad\,\, \beta \geq 0,
\end{aligned}
\end{align}
where $h$ is a time-step. For \eqref{c-s-1}, the ``{\it asymptotic flocking}'' means
\[  \sup_{t \in \mathbb N} |x_i(t)-x_j(t)|< \infty, \quad \lim_{t\to\infty} (v_i(t)-v_j(t))= 0,  \quad \mbox{$\forall~i\neq j$}. \]
The study of flocking behavior of multi-agent system based on mathematical models dates back to \cite{J-L-M,V-C-B-C-S}, even before Cucker and Smale. However, the significance of C-S model lies on the solvability of the model and phase-transition like behavior from the unconditional flocking to conditional flocking, as the decay exponent $\beta$ increases from zero to some number larger than $\frac{1}{2}$.

  The particular choice of weight function $\psi_{ij}$ in \eqref{c-s-1} is a crucial ingredient which
  makes this model attractive.
We note that in the original C-S model \eqref{c-s-1}, the agents are interacting under the all-to-all distance depending couplings $\psi_{ij}=\psi_{ji}>0$ for all $i\neq j.$
Later, Cucker-Smale's results were extended in several directions, e.g., stochastic noise effects \cite{A-H,C-M,H-L-L},
 collision avoidance \cite{A-C-H-L,C-D2}, steering toward preferred
 directions \cite{C-H},  bonding forces \cite{P-K-H}, 
 and mean-field limit \cite{C-F-R-T,H-Liu,H-T}.   
 In particular,
  an unexpected application was proposed by Perea, G\'{o}mez  and   Elosegui \cite{P-E-G} who suggested to use the C-S flocking mechanism \cite{C-S2}  in the formation of spacecrafts  for the Darwin space mission. Recently, the C-S
flocking mechanism was also applied to the modeling of emergent cultural classes in sociology and the stochastic volatility in financial markets \cite{A-B-H-K-L, H-K-L, K-H-J-K}.

In this paper, we consider  the  Cucker-Smale flocking  under a switching of   leadership topology with alternating leaders.
It is well known that the interaction topology is an important component to understand the dynamics of multi-particle systems and vice versa. Biological complex systems are ubiquitous in our nature and indeed take various interaction topologies. The first work in relation with the C-S model other than all-to-all topology is due to J. Shen,  who introduced the C-S model under hierarchical leadership \cite{She}.  A more general topology with leadership including hierarchical one was introduced  by Li and Xue in \cite{L-X}, namely, the rooted leadership. Unfortunately, the analysis given in \cite{L-X} cannot be applied to the continuous-time C-S model in a general setting. The continuous-time C-S model with a rooted leadership was studied in the framework of fast-slow dynamical systems in \cite{H-L-S-X} for some restricted situation. Recently, a topology with   joint rooted leadership was also considered in \cite{L-H-X}, in which  a ``joint'' connectivity is imposed only along some time interval, instead of every time slices.  Note that in previous works \cite{L-X, L-H-X} involving interaction topologies with leadership, the leader agent is assumed to be fixed in temporal evolution of flocks. This is not realistic. We can often observe that the leaders in migrating flocks can be changed during their migration. For example, as the large flocks of birds make a long journey from continent to continent, the leader birds located in the front of the flock endure larger resistance from the neighboring airs, e.g., wind. So leaders have to spend more energy than other followers. To save the energy of leader birds, leaders change alternatively. Of course, we can also find alternating leaders in our human social systems, for example, the periodic election of political leaders.  Motivated by these situations, we study the asymptotic flocking behavior of the C-S model with alternating leaders.

For the flocking analysis of the C-S model, most existing studies assume all-to-all and symmetric couplings so that the conservation of momentum is guaranteed, which is crucial in the energy estimates \cite{C-S2,C-S1,H-Liu}. In contrast, when the interaction topology is not symmetric, there is no general systematic approach for flocking estimate. The induction method is applied to hierarchical leadership \cite{C-D1,She}, which relies on the triangularity of the adjacency matrix. Another useful tool, the self-bounding argument developed by Cucker-Smale in \cite{C-S2,C-S1}, can be applied to different topologies; however,  it requires a flocking estimate  that relies on the topologies. For all-to-all coupling, the estimate is made on the matrix 2-norm through the spectrum of symmetric graph Laplacian. For rooted leadership, the authors in \cite{L-X,L-H-X} employed the $(sp)$ matrices \cite{X-G,X-L} to study the infinity norm of a reduced Laplacian. Note that for all these special cases, the asymptotic velocity for flocking is a priori known, either the mean value of the initial state or just that of the leader. Thus, we can study the dynamics of newly defined variables, i.e., the fluctuations around the average velocity, or the states relative to the fixed leader, which can be bounded,  to study the flocking behavior.  However, in the case of alternating leaders, we do not have the accurate information on the asymptotic velocity of the flock and the dynamics of referenced variables (see \eqref{reference}) cannot be given by nonnegative matrices as in \cite{L-X,L-H-X}.  To overcome this difficulty, we consider the combined dynamics of the original system and  the reference  system. We employ  the estimates in \cite{C-M-A1, C-M-A2} for the first-order consensus problem to find a priori estimate for the original system. From this, we can estimate the evolution of referenced velocity to support the self-bounding argument.

 The rest of this paper is organized as follows. In Section 2,
we   describe  our model and present a consensus estimate that  is useful in this work.
In Section 3, we  provide the flocking estimates for the discrete-time C-S model under
  rooted leadership with alternating leaders. 
 In Section 4, we present some numerical simulations.  
  Finally, Section 5 is devoted to the summary of this paper.

{\bf Notation:} Given $x\in \mathbb R^N$, we use the notations $|x|_\infty$ and  $|x|$ to denote the infinity norm (maximum norm) and 2-norm (Euclidean norm) of the vector   respectively.  For a $N \times N$ matrix $A \in \mathbb R^{N\times N}$, we use   $\|A\|_\infty$
to denote the infinity norm, that is, the maximum absolute row sum of $A$, and for two $N\times N$ matrices  $A=(a_{ij})$ and $B=(b_{ij})$, we use $A\circ B$ to denote the element-wise product, i.e., $A\circ B=(a_{ij}b_{ij}).$

\section{Preliminaries}
In this section we introduce the C-S flocking model under rooted leadership with
 alternating leaders. A useful estimate for the ``flocking'' matrix
  will be presented as well.
 \subsection{Flocks with alternating leaders}
In this subsection, we present a brief description of the C-S flocking model
under rooted leadership with alternating leaders.

Consider a group of agents $\{ 1,2,\dots,N\}$.  For the description of interaction topology, we use  graph theory \cite{D} as follows. A digraph ${\mathcal{G}} =({\mathcal{V}},{\mathcal{E}} )$ (without self-loops)   representing $N$ particles with interactions,  is defined by
\[ {\mathcal{V}} :=\{1, 2,\dots,N\}, \qquad  {\mathcal{E}} \subseteq {\mathcal{V}} \times {\mathcal{V}}\setminus  \{(i,i):~i\in {\mathcal V}\}. \]
We say $(j,i)\in {\mathcal{E}} $ if and only if $j$ is a neighbor of $i$, i.e., $j$ influences $i$.  As an information flow chart, we may write
 $j \to i$ if and only if $(j,i)\in {\mathcal{E}}$.
A directed path from $j$ to $i$ (of length $n+1$) comprises a sequence
of distinct arcs of the form $j\to k_1\to k_2\to\dots\to k_n\to i$.

On the other hand, once the directed neighbor graph
 ${\mathcal{G}} =( {\mathcal{V}}, {\mathcal{E}})$ is chosen,
  the associated adjacency matrix, denoted by $\chi( {\mathcal G})=(\chi_{ij})$, is
   given by
\begin{equation*}\label{chi}
\chi_{ij}=\left\{\begin{array}{c}
1, \qquad  \mbox{if}~~ (j,i)\in  {\mathcal{E}},\\
0, \qquad  \mbox{if}~~ (j,i)\notin  {\mathcal{E}}.\\
\end{array}\right.
\end{equation*}
Then, the C-S model on the digraph graph $ {\mathcal G}$ is given by
 \eqref{c-s-1} with the second equation replaced by
\begin{align*}\begin{aligned}
 v_i(t+1)&=v_i(t)+h\sum\limits_{j=1}^N \chi_{ij} \psi_{ij}(x(t))
 \left[v_j(t)-v_i(t)\right].
\end{aligned}\end{align*}
Thus, there is an interaction from $j$ to $i$ with strength  $\psi_{ij}(x(t))$ as long as it exists.
In order to take the interaction weight $\psi_{ij}(x)$ into account,
we refer to the matrix $\chi \circ \Psi_x:=\big(\chi_{ij}\psi_{ij}(x(t))\big)$
as the weighted adjacency matrix
of the C-S system   on the digraph  ${\mathcal{G}}$.

Below, we use the symbol $\mathcal{I}$ to denote a finite set indexing all admissible digraphs
${\mathcal{G}}_p=({\mathcal{V}}_p,{\mathcal{E}}_p)$ and let $\sigma:\mathbb N\to \mathcal I$ be a switching signal.  At each time point $t\in \mathbb N$, the system is registered on an admissible  graph ${\mathcal G}_{\sigma(t)}$, and  thus has a weighted adjacency matrix given by $\chi^{\sigma(t)}\circ \Psi_x$.    In this setting, we write the system  as the C-S system undergoing
switching  of the neighbor graphs with a switching signal  $\sigma $:
\begin{align}\begin{aligned}\label{C-S1}
x_i(t+1)&=x_i(t)+hv_i(t),\qquad  i=0,1,\dots,N, \\
 v_i(t+1)&=v_i(t)+h\sum\limits_{j=1}^N \chi_{ij}^{\sigma(t)} \psi_{ij}(x(t))\left[v_j(t)-v_i(t)\right],\\
   \psi_{ij}(x(t))&=\frac{1}{(1+|x_i(t)-x_j(t)|^2)^\beta}.
\end{aligned}\end{align}

We now introduce the  definition  of   C-S model under rooted leadership  with alternating leaders. For this we first present the rooted leadership  \cite{L-X}.
\begin{definition}\label{definealterleader}
\label{definejrl}
\begin{enumerate}
\item
The system \eqref{C-S1} is under \emph{rooted leadership  at time $t$}, if for the  digraph
$  {\mathcal{G}}_{\sigma(t)} $, there exists a unique vertex, say $r_t\in \mathcal V$, such that  the vertex $r_t$ does not have an incoming path from others, but any other vertex in
$\mathcal V$ has a directed path from $r_t$. The vertex $r_t$ represents the leader in the flock.
\item
The system \eqref{C-S1} is under \emph{rooted leadership with alternating leaders}, if
 the system \eqref{C-S1} is under rooted leadership  at each time slice, but the leader $r_t$ is not fixed for all time.
\end{enumerate}
\end{definition}
Note that the leader can be changed from time to time; thus, the asymptotic velocity is not a priori known, even if the flocking can be achieved.



\subsection{Consensus estimates}\label{subsec} In this subsection, we present a convergence  estimate  given in \cite{C-M-A1} for the first-order consensus problem.
 Given a sequence of stochastic matrices (also known as Markov matrices \cite{L-R}) $F_1, F_2, \dots \in \mathbb R^{N\times N}$,
 for consensus we expect that the product $F_t\cdots F_2 F_1$ converges
to a rank one matrix, i.e., has the same row vectors. For a single stochastic matrix $F$, under some connectivity condition of its associated graph, the matrix iteration $F^k$ converges to the rank one matrix $\mathbf 1 \pi$ with $\pi$ being the left-eigenvector of $F$, i.e., $\pi F=\pi$. To deal with the case of time-dependent state transition matrices, we  introduce some notations following \cite{C-M-A1}. Let $F$ be a stochastic matrix and we denote by $\lfloor F\rfloor$ the row vector  whose $j$th element is the smallest  element of the $j$th column of $S$. Let
$ [F] =F- \mathbf 1 \lfloor F\rfloor,$ then we have $\|[F]\|_\infty=1-\lfloor F\rfloor \mathbf 1,$ where $\mathbf 1=(1,1,\dots,1)^{\mathrm T}.$
In some sense, $[F]$ measures how much the matrix $F$ is different with a rank one matrix.
It is obvious that  a product of  stochastic matrices must be a stochastic matrix.
For an  infinite sequence of stochastic matrices $F_1, F_2, \dots,$ the limit
$$\lfloor\cdots F_t\cdots F_2F_1\rfloor:=\lim_{t\to \infty}\lfloor F_t\cdots F_2F_1\rfloor $$ always exists \cite{C-M-A1}, even if the product $F_t\cdots F_2F_1$
itself does not have a limit.  In order to form a consensus,
we expect the product $F_t\dots F_2F_1$  to converge to a rank one stochastic matrix, i.e.,
a  matrix of the form  $\mathbf 1 c$. If this is true, then the limit must be $\mathbf 1  \lfloor\cdots F_t\cdots F_2F_1\rfloor$.
In the following, we will say that the matrix product $F_t\dots F_2F_1$ converges to $\mathbf 1 \lfloor\cdots F_t\cdots F_2F_1\rfloor$ exponentially fast at a rate no slower than $\lambda$ if there exist nonnegative constants $b$ and $\lambda<1$ such that \begin{equation}\big\|F_t\cdots F_2F_1-\mathbf1 \lfloor\cdots F_t\cdots F_2F_1\rfloor\big\|_\infty \leq b \lambda^t, \quad t\geq 1.\end{equation}
We write $A\geq B$ if $A-B$ is a nonnegative matrix. 
 The following result    gives a sufficient condition to the exponential convergence.

\begin{proposition}\label{propconv}\cite{C-M-A1}
 (1) For any pair of stochastic matrices $F_1$ and $F_2$, we have $$[F_2F_1]\leq [F_2][F_1];$$

\noindent (2) Let $b$ and $\lambda<1$ be nonnegative constants. Suppose that $F_1, F_2, \dots$ is an infinity sequence of stochastic matrices with $$ \| [F_t\cdots F_2F_1]\|_\infty \leq b\lambda^t,\,\, t\geq 0.$$
Then, the matrix product $F_t\dots F_2F_1$ converges to $\mathbf 1 \lfloor\cdots F_t\cdots F_2F_1\rfloor$ exponentially fast at a rate no slower than $\lambda$.
\end{proposition}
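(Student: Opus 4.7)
For part (1), my plan is to exploit the additive decomposition $F = \mathbf{1}\lfloor F\rfloor + [F]$ together with two identities valid for any row-stochastic $F$: the scalar $s := \lfloor F\rfloor \mathbf{1}$ is the sum of column minima, and since each row of the entrywise nonnegative matrix $[F]$ sums to $1-s$, one has $[F]\mathbf{1} = (1-s)\mathbf{1}$. I will expand
\begin{equation*}
F_2 F_1 = \bigl(\mathbf{1}\lfloor F_2\rfloor + [F_2]\bigr)\bigl(\mathbf{1}\lfloor F_1\rfloor + [F_1]\bigr)
\end{equation*}
into four terms and show that the pair $\mathbf{1}\lfloor F_2\rfloor \mathbf{1}\lfloor F_1\rfloor + [F_2]\mathbf{1}\lfloor F_1\rfloor$ collapses via these identities to exactly $\mathbf{1}\lfloor F_1\rfloor$, leaving
\begin{equation*}
F_2 F_1 = \mathbf{1}\bigl(\lfloor F_1\rfloor + \lfloor F_2\rfloor [F_1]\bigr) + [F_2][F_1].
\end{equation*}
Because $[F_2][F_1] \geq 0$ entrywise, the $j$th column minimum of $F_2 F_1$ is at least $\lfloor F_1\rfloor_j + \lfloor F_2\rfloor [F_1]_j$; subtracting $\mathbf{1}\lfloor F_2 F_1\rfloor$ from both sides then yields $[F_2 F_1] \leq [F_2][F_1]$ entrywise, as claimed.

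For part (2), I will set $P_t := F_t\cdots F_1$ and decompose the target as
\begin{equation*}
F_t\cdots F_1 - \mathbf{1}\lfloor \cdots F_t\cdots F_1\rfloor = [P_t] + \mathbf{1}\bigl(\lfloor P_t\rfloor - \lfloor \cdots F_t\cdots F_1\rfloor\bigr).
\end{equation*}
The first summand has $\|\cdot\|_\infty \leq b\lambda^t$ directly by hypothesis. For the second, I plan to first establish the componentwise monotonicity $\lfloor P_{t+1}\rfloor \geq \lfloor P_t\rfloor$ by observing that each entry of column $j$ of $F_{t+1}P_t$ is a convex combination of entries of column $j$ of $P_t$ and is therefore at least $\lfloor P_t\rfloor_j$. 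Combined with the identity $\|[P_t]\|_\infty = 1 - \lfloor P_t\rfloor\mathbf{1}$ and the hypothesis, this yields $\lfloor P_t\rfloor \mathbf{1} \geq 1 - b\lambda^t$, so the monotone componentwise limit $L := \lfloor\cdots F_t\cdots F_1\rfloor$ exists, satisfies $L\mathbf{1} = 1$, and the entrywise nonnegative vector $L - \lfloor P_t\rfloor$ obeys $(L - \lfloor P_t\rfloor)\mathbf{1} \leq b\lambda^t$.

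Since a rank-one matrix $\mathbf{1}u$ with $u \geq 0$ has infinity norm equal to $u\mathbf{1}$, the second summand in the decomposition is then bounded in infinity norm by $b\lambda^t$, and the triangle inequality delivers $\|F_t\cdots F_1 - \mathbf{1}\lfloor\cdots F_t\cdots F_1\rfloor\|_\infty \leq 2b\lambda^t$, which is the desired exponential convergence at rate no slower than $\lambda$. The main obstacle I anticipate is the algebraic collapse in part (1): noticing that the two cross terms in the expanded product reassemble cleanly via $[F]\mathbf{1} = (1-\lfloor F\rfloor\mathbf{1})\mathbf{1}$ is the one clever step; once this is in hand, part (2) is almost mechanical, requiring only a careful distinction between $\lfloor P_t\rfloor$ and its $t \to \infty$ limit $L$ in the splitting.
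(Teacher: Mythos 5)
Your proof is correct. Note that the paper itself offers no proof of this proposition --- it is quoted verbatim from \cite{C-M-A1} --- so there is nothing internal to compare against; your argument (the expansion $F_2F_1=\mathbf 1\big(\lfloor F_1\rfloor+\lfloor F_2\rfloor[F_1]\big)+[F_2][F_1]$ using $[F]\mathbf 1=(1-\lfloor F\rfloor\mathbf 1)\mathbf 1$ for part (1), and the monotone convergence of $\lfloor P_t\rfloor$ plus the splitting $P_t-\mathbf 1 L=[P_t]+\mathbf 1(\lfloor P_t\rfloor-L)$ for part (2)) is essentially the same route taken in that reference, and the resulting bound $2b\lambda^t$ meets the paper's definition of convergence at rate no slower than $\lambda$.
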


Therefore, if each of the matrices in the sequence $F_1, F_2, \dots$ satisfies $\|[F_i]\|_\infty\leq \lambda,$ then $F_t\dots F_2F_1$ does  converge  to $\mathbf 1 \lfloor\cdots F_t\cdots F_2F_1\rfloor$ exponentially. Since $\|[F]\|_\infty=1-\lfloor F\rfloor \mathbf 1,$ $\|[F]\|_\infty<1$ if and only if the matrix $\lfloor F\rfloor $ has at least one positive element, i.e., the matrix $F$ has at least one nonzero column.
For a stochastic matrix $F=(F_{ij})\in \mathbb R^{N\times N}$, we define the associated digraph as $\mathcal G=(\mathcal V, \mathcal E)$ with $\mathcal V=\{1,2,\dots,N\}$ and $\mathcal E=\{(j,i): F_{ij}>0\}.$
We call a   graph $\mathcal G$ is a strongly rooted graph if there exists some vertex $j$ such that $(j,i)\in \mathcal E$ for all $i\neq j.$ For such a $j$, we say that it is  a strong root of the graph and the graph is strongly rooted at $j$. In what follows, we also say $\mathcal G$ is a rooted graph if there exists a vertex which has a path to any other agent; such a vertex is called the root of the graph. Then we arrive at the following result.

\begin{lemma}\label{lemmaeq}\cite{C-M-A1}
The  digraph  associated to a stochastic matrix $F$ is strongly rooted if and only if $\|[F]\|_\infty<1$.
\end{lemma}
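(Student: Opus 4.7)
The plan is to translate both conditions into elementary statements about the column entries of $F$ and then read off the equivalence directly. On one side, the definitions give $\|[F]\|_\infty = 1 - \lfloor F\rfloor\mathbf{1} = 1 - \sum_{j=1}^{N}\min_{1\le i\le N} F_{ij}$, so $\|[F]\|_\infty < 1$ is equivalent to the existence of an index $j^\star$ with $\min_{i} F_{ij^\star} > 0$, i.e.\ the $j^\star$th column of $F$ has all entries strictly positive. On the other side, $\mathcal G$ being strongly rooted at $j^\star$ is by definition the statement $(j^\star,i)\in\mathcal E$ for every $i\neq j^\star$, which in terms of entries reads $F_{ij^\star}>0$ for all $i\neq j^\star$.

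First I would treat the implication $\|[F]\|_\infty < 1 \Rightarrow \mathcal G$ strongly rooted. Since the sum of the column minima is positive, at least one column, say column $j^\star$, is entirely positive; discarding the diagonal entry still leaves $F_{ij^\star}>0$ for every $i\neq j^\star$, so $j^\star$ is a strong root of $\mathcal G$.

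Next I would address the converse. If $\mathcal G$ is strongly rooted at $j^\star$, then $F_{ij^\star}>0$ for every $i\neq j^\star$. Combined with the standing convention used in \cite{C-M-A1} that the stochastic transition matrices governing the consensus protocol carry strictly positive diagonal (each agent retains some weight on its own state, so that a self-loop $(j^\star,j^\star)$ is present), the entire $j^\star$th column is strictly positive, hence $\min_i F_{ij^\star}>0$ and $\lfloor F\rfloor\mathbf{1}>0$, giving $\|[F]\|_\infty<1$.

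The only genuine pitfall is bookkeeping rather than mathematical: one has to be explicit about whether self-loops are admitted in $\mathcal E$. Under the convention of \cite{C-M-A1}, where diagonals of consensus matrices are strictly positive, the two conditions coincide verbatim; without that convention the reverse implication needs the extra hypothesis $F_{j^\star j^\star}>0$. I would flag this at the outset and then let the chain of equivalences above close the argument.
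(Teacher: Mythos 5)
The paper quotes this lemma from \cite{C-M-A1} without proof; the only argument it offers is the remark just before the statement, namely that $\|[F]\|_\infty=1-\lfloor F\rfloor\mathbf 1<1$ exactly when some column of $F$ is entirely positive. Your proof is that same reduction carried out in both directions, so it is correct and matches the intended argument. Your caveat about the diagonal is a genuine point, not mere bookkeeping: with the paper's definitions ($\mathcal E=\{(j,i):F_{ij}>0\}$ and ``strongly rooted at $j$'' requiring arcs only to the $i\neq j$), the direction from strong rootedness to $\|[F]\|_\infty<1$ can actually fail when the root column has a zero diagonal entry --- for instance $F=\left(\begin{smallmatrix}0&1\\1&0\end{smallmatrix}\right)$ is strongly rooted at both vertices yet $\|[F]\|_\infty=1$. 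The lemma must therefore be read, as in \cite{C-M-A1}, for stochastic matrices whose associated digraphs carry self-arcs at every vertex; this is harmless where the paper applies it, since under \eqref{h} the flocking matrices $F_{\sigma(t)}=Id-hL_{\sigma(t)}$ have diagonal entries at least $\frac{1}{N+1}$, and products of such matrices retain strictly positive diagonals.
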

We next introduce the composition of digraphs. By the composition of   digraphs $\mathcal G_p$ with $ \mathcal G_q$,  denoted by $\mathcal G_q \circ \mathcal G_p$, we mean the digraph with the
vertex set $\mathcal V$  and arc set defined in such a way so that $(i, j)$ is an arc of the
composition just in case there is a vertex $k$ such that $(i, k)$ is an arc of $\mathcal G_p$ and $(k, j)$
is an arc of $\mathcal G_q$. Denote their associated flocking matrices by $F_p$ and $F_q$ respectively.  Then, we see that  the flocking matrix associated to the digraph $\mathcal G_q \circ \mathcal G_p$ is exactly the matrix product  $F_qF_p$.
\begin{proposition}\label{propcomp} \cite{C-M-A1} Let $\mathcal G_{\sigma(1)}, \mathcal G_{\sigma(2)}, \dots,$ be a sequence of rooted graph, then  for any $t_1\in \mathbb N$, the graph $\mathcal G_{\sigma(t_1+ (N-1)^2)
}  \circ \cdots \circ\mathcal G_{\sigma(t_1+ 2)}\circ \mathcal G_{\sigma(t_1+ 1)}$ is a strongly rooted graph.
\end{proposition}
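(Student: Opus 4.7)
The plan is to select a single vertex $v^{*} \in \mathcal V$ by a pigeonhole argument on the individual roots, and then track the forward reach set starting from $v^{*}$ through the window, showing that it grows to all of $\mathcal V$ within $(N-1)^2$ steps. I treat each $\mathcal G_{\sigma(t)}$ as having a self-loop at every vertex; this is automatic in the paper's context since the $\mathcal G_{\sigma(t)}$ are read off from the flocking matrices $F^{\sigma(t)}$, whose diagonal entries are strictly positive. The self-loops are essential: they make reach sets monotone and identify ``reachable from $v^{*}$ within $\le t$ ordered steps'' with ``endpoint of an arc out of $v^{*}$ in the $t$-fold composition''.

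For the pigeonhole step, for each $i=1,\dots,(N-1)^2$ pick any root $r_i$ of $\mathcal G_{\sigma(t_1+i)}$. If every element of $\mathcal V$ appeared at most $N-2$ times in the list $(r_1,\dots,r_{(N-1)^2})$, the total count would be at most $N(N-2)=(N-1)^2-1$, a contradiction. Hence some vertex $v^{*}\in\mathcal V$ is a root of at least $N-1$ of the graphs in the window, and I take this $v^{*}$ as the candidate strong root.

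Next I prove the key one-step growth lemma: for any rooted digraph $\mathcal G$ with self-loops and any set $S\subseteq\mathcal V$ that contains some root $r$ of $\mathcal G$, the out-image $S':=\{u\in\mathcal V:\exists\,w\in S,\;(w,u)\in\mathcal E(\mathcal G)\}$ satisfies either $S=\mathcal V$ or $S'\supsetneq S$. The proof is a boundary-crossing argument: given any $u\in\mathcal V\setminus S$, a directed path $r=v_0\to v_1\to\cdots\to v_k=u$ in $\mathcal G$ must contain a consecutive pair with $v_i\in S$ and $v_{i+1}\notin S$, exhibiting an element of $S'\setminus S$.

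Finally I iterate the lemma along the reach dynamics $R_0:=\{v^{*}\}$ and $R_t:=\{u:\exists\,w\in R_{t-1},\;(w,u)\in\mathcal E(\mathcal G_{\sigma(t_1+t)})\}$ for $1\le t\le(N-1)^2$. Self-loops force $v^{*}\in R_t$ and $R_{t-1}\subseteq R_t$ for every $t$, and $R_t$ coincides with the out-neighborhood of $v^{*}$ in the composition $\mathcal G_{\sigma(t_1+t)}\circ\cdots\circ\mathcal G_{\sigma(t_1+1)}$. At each of the $\ge N-1$ indices $i$ at which $v^{*}$ is a root of $\mathcal G_{\sigma(t_1+i)}$, the growth lemma applied with $S=R_{i-1}$ gives $R_i\supsetneq R_{i-1}$, so $|R_{(N-1)^2}|\ge 1+(N-1)=N$ and hence $R_{(N-1)^2}=\mathcal V$. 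This says exactly that $v^{*}$ has an arc to every other vertex in the $(N-1)^2$-fold composition, so the composition is strongly rooted at $v^{*}$. The crux is the pigeonhole choice of $v^{*}$: without it, the roots could shift so that the current reach set never contains the current graph's root, and the reach set could stall without ever covering $\mathcal V$.
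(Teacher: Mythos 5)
The paper offers no proof of this proposition---it is quoted from \cite{C-M-A1}---and your argument is correct and essentially the same as the one in that reference: pigeonhole to extract a vertex $v^{*}$ that is a root of at least $N-1$ of the $(N-1)^2$ graphs, then a monotone reach-set iteration (monotone because of the self-arcs) that grows strictly at each of those $\geq N-1$ indices. Your explicit insistence on self-loops is also the right call: the claim is false for the self-loop-free neighbor graphs as literally defined, but holds for the digraphs associated with the stochastic flocking matrices $F_{\sigma(t)}$, which is the form in which the proposition is actually invoked in the proof of Proposition~\ref{propestimate}.
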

Based on this result, we can obtain a strongly rooted graph from the composition of rooted leadership with alternating leaders.

\section{Flocking analysis}
In this section, we introduce the C-S flocking matrix and a reference system to study    the C-S model with alternating leaders.

\subsection{A reference system and the flocking matrix}
We consider a group of particles $\{ 1,2,\dots,N\}$ whose dynamics is governed by \eqref{C-S1}. Let $x=(  x_1,
x_2,\dots, x_N)^\top$ and $v=( v_1,
v_2,\dots, v_N)^\top\in {(\mathbb{R}^3)}^{N }$ be the position
and velocity vector of the flock, respectively. In order to  simplify the notations, for a given solution $\{(x(t), v(t))\}$ to system \eqref{C-S1}  under a switching signal $\sigma$,
  we write
\begin{equation}\label{d}
\psi_{ij}(t):=\psi_{ij}(x(t)),\quad\,\,
 d_i(t):=\sum\limits_{j=1, j\neq i}^N \chi_{ij}^{\sigma(t)}\psi_{ij}(t).
 \end{equation}
In order to use a self-bounding argument, we introduce a reference system for the $N$-flocks. We use the last agent  as the reference and set \begin{align}\begin{aligned}\label{reference}  \hat x &:=(\hat x_1,  \dots, \hat x_{N-1})^\top=
(x_1-x_N,   \dots, x_{N-1}-x_N)^\top, \\
\hat v &:=(\hat v_1,  \dots, \hat v_{N-1})^\top=(v_1-v_N,   \dots,
v_{N-1}-v_N)^\top.
\end{aligned}\end{align}
It is obvious that the asymptotic flocking behavior is equivalent to  the boundedness of $\hat x$ together with the zero convergence of $\hat v$.
If we set $\hat  x_{N}=0$ and $\hat v_{N}=0$, then
$$|x_i-x_j|^2=|\hat x_i-\hat x_j|^2
\leq 2(|\hat x_i|^2+|\hat x_j|^2)\leq 2|\hat x|^2, \quad 1\leq i,j\leq N.$$ This means that the C-S communication weights satisfy
\begin{align}\label{psi}\psi_{ij}(t)=\frac{1}{(1+|x_i(t)-x_j(t)|^2)^\beta}
\geq \frac{1}{(1+2|\hat x(t)|^2)^\beta}.\end{align}
Note that   the dynamics of $\hat x(t)$ and ${\hat v}(t)$ are governed by
\begin{align}\label{eqhatx}
\hat x(t+1)=\hat x(t) +h\hat v(t), \quad {\hat v}(t+1)= P_{\sigma(t)}\hat v(t),
\end{align}
where the matrix $P_{\sigma(t)}\in \mathbb R^{(N-1)\times (N-1)}$  is given by \begin{equation*}
(P_{\sigma(t)})_{ij}=\left\{\begin{array}{c}
1-hd_i-h\chi^{\sigma(t)}_{Ni}\psi_{Ni}(t), \qquad\qquad   i=j,\\
  h\chi^{\sigma(t)}_{ij}\psi_{ij}(t)-h\chi^{\sigma(t)}_{Nj}\psi_{Nj}(t), \qquad  i\neq j,\\
\end{array}\right.
\end{equation*}
where $i,j=1,2,\dots,N-1.$
  If the flock has a fixed leader, say $N$, then $\chi_{Nj}^{\sigma(t)}\equiv0$ for all $j\neq N$,  and thus the matrix $P_{\sigma(t)}$  is a nonnegative matrix  provided a sufficiently small $h$.  However, if the leader agent changes from time to time, we cannot expect  the matrix $P_{\sigma(t)}$  to be always a nonnegative matrix. Therefore, the approach in \cite{L-X,L-H-X} cannot be applied in this case.
  To carry out a flocking estimate, we will not use the explicit difference equation of $\hat v(t)$,  instead  we   derive a direct estimate of $\hat v(t)$ through   $v(t)$.    This is the key idea in this study  apart from the previous  works.


To estimate $v(t)$, we  derive a compact form  from the system  \eqref{C-S1} as follows:
\begin{align}\begin{aligned}  \label{reducedflock}
x(t+1)&=x(t)+hv(t),  \\
 v(t+1)&=(Id-hL_{\sigma(t)})v(t)=:F_{\sigma(t)}v(t), \end{aligned}\end{align}
 where $L_{\sigma(t)}\in \mathbb R^{N\times N}$ is the weighted  Laplacian of digraph $\mathcal G_{\sigma (t)}$, that is,
\begin{align*}\begin{aligned}&L_{\sigma(t)}= \left(\begin{array}{cccc}
 d_1(t) & -\chi_{12}^{\sigma(t)}\psi_{12}(t) & \cdots & -\chi_{1N}^{\sigma(t)}\psi_{1N}(t)\\
 -\chi_{21}^{\sigma(t)}\psi_{21}(t) & d_2(t) & \cdots & -\chi_{2N}^{\sigma(t)}\psi_{2N}(t)\\
 \cdots & \cdots & \ddots & \cdots\\
 -\chi_{N1}^{\sigma(t)}\psi_{N1}(t) & -\chi_{N2}^{\sigma(t)}\psi_{N2}(t) & \cdots & d_N(t)\\
\end{array}
\right).\end{aligned}\end{align*}
The matrix $F_{\sigma(t)}:=Id-hL_{\sigma(t)}$ is called the (C-S) flocking matrix at time $t$ associated to the neighbor graph $\mathcal G_{\sigma (t)}$.
If we choose a small $h>0$ such that all diagonal entries of $F_{\sigma(t)}$ are nonnegative,
then $F_{\sigma(t)}$ is a stochastic matrix, i.e., a nonnegative matrix with
 each row sum being 1.

\subsection{Basic estimates} In this subsection we present an estimate of $\hat v(t)$ through $v(t)$. We first concentrate on the flocking matrix $F_{\sigma(t)}$ that determines the dynamics of $v(t)$. To do this, we will employ the estimates in \cite{C-M-A1} (see Subsection \ref{subsec}).

Note that for flocking under rooted leadership (see Definition \ref{definealterleader}), the neighbor graph is a rooted graph with the leader agent acting as the root. Thus,
Proposition \ref{propcomp} and Lemma \ref{lemmaeq} imply that the product of $(N-1)^2$ flocking matrices must  satisfy  $$\big\|[F_{\sigma(t_1+ (N-1)^2-1)}\cdots F_{\sigma(t_1+ 1) }F_{\sigma(t_1 )}]\big\|_\infty<1.$$
Inspired by this fact, we will present an estimate for the product of flocking matrices which can be applied to the analysis of the second-order C-S model.

We first estimate the convergence of $F_{\sigma(t)}\cdots F_{\sigma(1) }F_{\sigma(0)}$ by Proposition \ref{propconv}.

\begin{proposition}\label{propestimate}
Suppose that $\{(x(t),v(t))\}_{t\in \mathbb N}$ is a solution of C-S model \eqref{C-S1} with alternating leaders.
Assume \begin{equation}\label{boundx}|\hat x(t)|\leq B,\quad t\in \mathbb N,\end{equation}
and
\begin{equation}\label{h}
h<\frac{1}{N+1}.\end{equation}
  Then, we have
\begin{align}\begin{aligned}  \label{eqprop}
 &\big\|F_{\sigma(t)}\cdots F_{\sigma( 1) }F_{\sigma(0)}-\mathbf 1 \lfloor \cdots F_{\sigma(t)}\cdots F_{\sigma( 1) }F_{\sigma(0)} \rfloor\big\|_\infty  \leq
\left(1-(hR)^{(N-1)^2}\right)^{\big[\frac{t+1}{(N-1)^2}\big]},
\end{aligned}\end{align}
where $R=\frac{1}{(1+2B^2)^{\beta}}$.
\end{proposition}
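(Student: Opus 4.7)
The plan is to reduce \eqref{eqprop} to Proposition \ref{propconv}(2) by exhibiting, for every block of $(N-1)^2$ consecutive flocking matrices, one column whose entries are uniformly bounded below by $(hR)^{(N-1)^2}$.

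First I would verify that every $F_{\sigma(t)}$ is a stochastic matrix whose positive entries admit a uniform lower bound $hR$. From the form $F_{\sigma(t)} = Id - h L_{\sigma(t)}$ in \eqref{reducedflock}, each row sums to one, each off-diagonal entry $h\chi_{ij}^{\sigma(t)}\psi_{ij}(t)$ is nonnegative, and each diagonal entry equals $1 - h d_i(t) \geq 1 - h(N-1) > 0$ because $d_i(t) \leq N-1$ and $h < 1/(N+1)$. The standing hypothesis $|\hat x(t)| \leq B$ together with \eqref{psi} gives $\psi_{ij}(t) \geq R$, so every off-diagonal entry associated with an arc of $\mathcal G_{\sigma(t)}$ is at least $hR$. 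A short estimate using $h \leq 1/N$ also yields $1-h(N-1) \geq hR$, so in fact every positive entry of $F_{\sigma(t)}$ (self-loops included) is at least $hR$.

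Next I would extract the block structure. Fix $t_1 \geq 0$ and set
\[
Q_{t_1} := F_{\sigma(t_1 + (N-1)^2 - 1)} \cdots F_{\sigma(t_1+1)}\,F_{\sigma(t_1)}.
\]
The digraph attached to each $F_{\sigma(s)}$ is $\mathcal G_{\sigma(s)}$ augmented with a self-loop at every vertex, which is still a rooted graph. Proposition \ref{propcomp} therefore yields a vertex $r=r(t_1)$ at which the composition underlying $Q_{t_1}$ is strongly rooted. For every $i$ there is consequently a time-indexed path of length $(N-1)^2$ from $r$ to $i$ in the augmented sequence, whose contribution to $(Q_{t_1})_{ir}$ is a product of $(N-1)^2$ positive entries of the matrices $F_{\sigma(s)}$, each at least $hR$ by the previous step. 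Hence $(Q_{t_1})_{ir} \geq (hR)^{(N-1)^2}$ for every $i$, and
\[
\|[Q_{t_1}]\|_\infty = 1 - \lfloor Q_{t_1}\rfloor \mathbf 1 \leq 1 - (hR)^{(N-1)^2}.
\]

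Finally I would iterate Proposition \ref{propconv}(1). With $m = [(t+1)/(N-1)^2]$, split the full product $F_{\sigma(t)}\cdots F_{\sigma(0)}$ into $m$ consecutive length-$(N-1)^2$ block products $Q^{(0)}, \ldots, Q^{(m-1)}$ from the right, together with a (possibly empty) leading stochastic remainder $S$. Inductive use of Proposition \ref{propconv}(1) yields
\[
[F_{\sigma(t)}\cdots F_{\sigma(0)}] \leq [S]\,[Q^{(m-1)}]\cdots[Q^{(0)}],
\]
so that $\|[F_{\sigma(t)}\cdots F_{\sigma(0)}]\|_\infty \leq \|[S]\|_\infty \prod_{k=0}^{m-1}\|[Q^{(k)}]\|_\infty \leq (1-(hR)^{(N-1)^2})^{m}$ after using $\|[S]\|_\infty \leq 1$ together with the block bound. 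Proposition \ref{propconv}(2) then delivers \eqref{eqprop}.

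The main obstacle is the middle step: one must justify that the self-looped digraphs underlying the $F_{\sigma(s)}$'s actually satisfy the hypothesis of Proposition \ref{propcomp}, and then convert the purely combinatorial "strongly rooted" conclusion into a quantitative lower bound on an entire column of $Q_{t_1}$. This is precisely where the smallness hypothesis $h < 1/(N+1)$ is used: it forces every positive entry of each $F_{\sigma(s)}$, including the diagonal self-loop entries, to be at least $hR$, so that any length-$(N-1)^2$ path in the augmented digraph sequence contributes at least $(hR)^{(N-1)^2}$ to the corresponding entry of $Q_{t_1}$.
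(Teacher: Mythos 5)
Your proof is correct and follows essentially the same route as the paper's: a uniform lower bound $hR$ on the positive entries of each stochastic flocking matrix, strong rootedness of every length-$(N-1)^2$ composition via Proposition~\ref{propcomp}, the resulting bound $\big\|[Q_{t_1}]\big\|_\infty \leq 1-(hR)^{(N-1)^2}$, and Proposition~\ref{propconv} to conclude. The only (welcome) difference is that you make explicit two points the paper leaves implicit: the self-loops needed for Proposition~\ref{propcomp} to apply to the digraphs of the flocking matrices, and the block-wise use of the submultiplicativity $[F_2F_1]\leq [F_2][F_1]$.
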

\begin{proof}
According to \eqref{psi} and \eqref{boundx}, for any $t\geq 0$, we have
\begin{equation}\label{boundpsi}\frac{1}{(1+2B^2)^\beta}
\leq \psi_{ij}(t)\leq 1,\qquad 1\leq i,j\leq N.\end{equation}
Moreover, by the assumption \eqref{h}, we have
\begin{equation}\label{Pdiag1}(F_{\sigma(t)})_{ii}=1-hd_i(t)\geq \frac{1}{N+1}>\frac{h}{(1+2B^2)^\beta}, \quad 1\leq i\leq N.\end{equation}
From \eqref{boundpsi} and \eqref{Pdiag1} we see that under the assumptions \eqref{boundx} and \eqref{h},   $F_{\sigma(t)}$ is a stochastic matrix with  nonzero entries  not less than $hR:=\frac{h}{(1+2B^2)^\beta}$. Consequently,  all the nonzero elements of the matrix product $F_{\sigma(t)}\cdots F_{\sigma(1)} F_{\sigma(0)}$ must be not less than $(hR)^{t+1}$.
Note that if the system \eqref{C-S1} is under rooted leadership at time $t$, then the neighbor graph $\mathcal G_{\sigma(t)}$ is a rooted graph with the leader acting as the root. We recall Proposition \ref{propcomp} to see that the composition of neighbor graph along a time interval of length $(N-1)^2$ must be a strongly rooted graph. By Lemma \ref{lemmaeq}, this means that  any $(N-1)^2$-product of flocking matrices satisfies
  \[\big\|[F_{\sigma(t_1+ (N-1)^2-1)}\cdots F_{\sigma(t_1+1) }F_{\sigma(t_1)}]\big\|_\infty<1,\]
or equivalently,  \[1-\lfloor F_{\sigma(t_1+ (N-1)^2-1)}\cdots F_{\sigma(t_1+1) }F_{\sigma(t_1)}\rfloor \mathbf 1<1.\] Thus, the matrix   $F_{\sigma(t_1+ (N-1)^2-1)}\cdots F_{\sigma(t_1+1) }F_{\sigma(t_1)}$ has at least one nonzero column.       Because all of  the  nonzero elements of $F_{\sigma(t_1+ (N-1)^2-1)}\cdots F_{\sigma(t_1+1) }F_{\sigma(t_1)}$ are not less than $(hR)^{(N-1)^2},$   we find that \[1-\lfloor F_{\sigma(t_1+ (N-1)^2-1)}\cdots F_{\sigma(t_1+1) }F_{\sigma(t_1)}\rfloor \mathbf 1\leq 1-(hR)^{(N-1)^2},\]
i.e., \begin{equation}
\big\|[F_{\sigma(t_1+ (N-1)^2-1)}\cdots F_{\sigma(t_1+1) }F_{\sigma(t_1)}]\big\|_\infty\leq1-(hR)^{(N-1)^2}.
\end{equation}
This implies that for all $t\in \mathbb N$, \begin{equation}  \label{eq}
\|F_{\sigma(t)}\cdots F_{\sigma( 1) }F_{\sigma(0)}\|_\infty \leq
\left(1-(hR)^{(N-1)^2}\right)^{\big[\frac{t+1}{(N-1)^2}\big]}.
\end{equation}
We now combine \eqref{eq} and Proposition \ref{propconv} to obtain \eqref{eqprop}.
\end{proof}
Next, we use Proposition \ref{propestimate} to estimate the evolution of  $\hat v(t).$
 \begin{proposition}\label{propestimate1}
Suppose   $\{(x(t),v(t))\}_{t\in \mathbb N}$ is a solution of C-S model \eqref{C-S1} with alternating leaders. If (\ref{boundx}) and (\ref{h}) hold,  then for the referenced velocity $\hat v(t)$  we have
\begin{equation}  \label{eqprop1}
|\hat v(t)|_\infty\leq  2\left(1-(hR)^{(N-1)^2}\right)^{\big[\frac{t+1}{(N-1)^2}\big]} |v(0)|_\infty.
\end{equation}
\end{proposition}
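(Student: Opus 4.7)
The plan is to deduce the bound on $|\hat v(t)|_\infty$ directly from Proposition \ref{propestimate}, using that convergence of the flocking-matrix product to a rank-one matrix forces every component $v_i(t)$ toward one common limit, which then pins $\hat v_i(t)=v_i(t)-v_N(t)$ near zero. No separate analysis of the non-nonnegative matrix $P_{\sigma(t)}$ in \eqref{eqhatx} is needed; the whole job is offloaded onto the first-order estimate already in hand.

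First, I iterate \eqref{reducedflock} to write $v(t)=F_{\sigma(t-1)}\cdots F_{\sigma(0)}\,v(0)$, so that each $v_i(t)$ is the action of the $i$-th row of this product on $v(0)$. Under the hypotheses \eqref{boundx} and \eqref{h}, the proof of Proposition \ref{propestimate} already verifies that every $F_{\sigma(t)}$ is a stochastic matrix, so Proposition \ref{propestimate} applies and supplies a limit row vector $\Pi:=\lfloor\cdots F_{\sigma(t)}\cdots F_{\sigma(0)}\rfloor$ together with the decay
\begin{equation*}
\bigl\|F_{\sigma(t-1)}\cdots F_{\sigma(0)}-\mathbf 1 \Pi\bigr\|_\infty\leq \left(1-(hR)^{(N-1)^2}\right)^{\left[\frac{t+1}{(N-1)^2}\right]}.
\end{equation*}
I then set $c:=\Pi\,v(0)\in\mathbb R^d$, which plays the role of the common velocity toward which each $v_i(t)$ is driven.

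Second, I transfer this scalar matrix estimate to the vector $v(t)$. Writing $M$ for the difference matrix above and noting that the $i$-th entry of $M\,v(0)$ equals $v_i(t)-c$, a componentwise triangle inequality (which is where the vector nature of $v_j(0)\in\mathbb R^d$ enters) yields
\begin{equation*}
|v_i(t)-c|\;\leq\;\sum_{j=1}^{N}|M_{ij}|\,|v_j(0)|\;\leq\;\|M\|_\infty\,|v(0)|_\infty.
\end{equation*}
Since $\hat v_i(t)=(v_i(t)-c)-(v_N(t)-c)$, a further application of the triangle inequality gives $|\hat v_i(t)|\leq 2\,\|M\|_\infty\,|v(0)|_\infty$, and taking the maximum over $i=1,\dots,N-1$ produces \eqref{eqprop1}.

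The only genuine subtlety is reconciling the scalar matrix $\ell^\infty$-norm in Proposition \ref{propestimate} with the vector-valued $|\cdot|_\infty$ used for $v$ and $\hat v$; this compatibility is exactly what the componentwise triangle inequality above supplies, and it is the only step not reducible to quoting Proposition \ref{propestimate}. Everything else is a matter of identifying the correct reference point $c$ and splitting $\hat v_i = (v_i-c)-(v_N-c)$, so the proof is short.
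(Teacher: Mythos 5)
Your proposal is correct and follows essentially the same route as the paper: define the limit row vector from Proposition \ref{propestimate}, bound $|v(t)-\mathbf 1\Pi v(0)|_\infty$ by the matrix norm estimate times $|v(0)|_\infty$, and split $\hat v_i=(v_i-c)-(v_N-c)$ to pick up the factor $2$. The only difference is that you spell out the componentwise triangle inequality reconciling the matrix $\infty$-norm with the vector-valued entries, which the paper leaves implicit.
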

\begin{proof} For the simplicity of notation,  we denote the asymptotic  velocity alignment state for the solution $\{(x(t),v(t))\}$  as   $$v^\infty=\mathbf 1 \lfloor \cdots F_{\sigma(t)}\cdots F_{\sigma(1) }F_{\sigma(0)} \rfloor v(0).$$
Then, it follows from \eqref{eqprop} that we have  \begin{align*} | v(t)- v^\infty |_\infty &=\big|F_{\sigma(t)}\cdots F_{\sigma( 1) }F_{\sigma(0)}v(0)-\mathbf 1 \lfloor \cdots F_{\sigma(t)}\cdots F_{\sigma(1) }F_{\sigma(0)} \rfloor v(0) \big|_\infty \\ &\leq   \left(1-(hR)^{(N-1)^2}\right)^{\big[\frac{t+1}{(N-1)^2}\big]} |v(0)|_\infty.\end{align*}
Due to the definition of referenced variables $(\hat x(t), \hat v(t))$, we easily find
\begin{align*}|\hat v(t)|_\infty&\leq \big|\big(v(t)-v^\infty\big)-\big(v_N(t)-v^\infty\big)\big|_\infty\\&\leq 2\left(1-(hR)^{(N-1)^2}\right)^{\big[\frac{t+1}{(N-1)^2}\big]} |v(0)|_\infty.\end{align*}
 Here, we use $v_N(t)$ itself to denote the $N$-duplication of $v_N(t)\in \mathbb R^3$.
\end{proof}

\subsection{Flocking behavior}
To carry out the flocking analysis, we use the spectral norm (or 2-norm) of vectors and matrices. However,   the previous  estimates for $v(t)$ is given by the infinity norm. Due to the equivalence of norms in finite-dimensional space,  there
exists a constant $\lambda\geq1,$ such that for all $\hat v\in
(\mathbb{R}^3)^{N-1},$
\begin{equation}\label{lambda}
|\hat v|_\infty\leq |\hat v|\leq \lambda|\hat v|_\infty,
\end{equation}
where $|\cdot|$ denotes the 2-norm of vector.
Before we present the main result, we quote an elementary
lemma from \cite{C-S2} without the proof. Consider the following algebraic equation:
\begin{equation}\label{Alg}
F(z) :=z^r-c_1z^s-c_2 = 0.
\end{equation}
\begin{lemma}\label{selfbdd}\cite{C-S2}
Suppose that the coefficients and exponents in $F$ satisfy
\[ c_1, c_2>0 \quad \mbox{and} \quad r>s>0. \]
Then the equation \eqref{Alg} has a unique positive zero $z_*$ satisfying
\[ z_* \leq \max\{(2c_1)^{\frac{1}{r-s}},
(2c_2)^{\frac{1}{r}}\}, \quad \mbox{and} \quad F(z)\leq 0 \,\,\,\, \mbox{for\,\, $0\leq z \leq z_*$}. \]
\end{lemma}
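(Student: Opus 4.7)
The plan is to treat $F$ as a real function on $[0,\infty)$ and combine elementary calculus (monotonicity via the derivative) with an explicit computation of $F$ at the candidate upper bound. I would split the argument into three ingredients: (i) existence and uniqueness of a positive zero, (ii) the sign information $F(z)\leq 0$ on $[0,z_*]$, and (iii) the quantitative bound $z_*\leq \max\{(2c_1)^{1/(r-s)},(2c_2)^{1/r}\}$.

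For (i) and (ii), I would compute
\[
F'(z)=rz^{r-1}-c_1 s z^{s-1}=z^{s-1}\bigl(rz^{r-s}-c_1 s\bigr),
\]
which, since $r>s>0$ and $c_1>0$, has a single positive root $z_c=(c_1 s/r)^{1/(r-s)}$, with $F'<0$ on $(0,z_c)$ and $F'>0$ on $(z_c,\infty)$. Thus $F$ is strictly decreasing on $[0,z_c]$ and strictly increasing on $[z_c,\infty)$. Combined with $F(0)=-c_2<0$ and $F(z)\to\infty$ as $z\to\infty$, the intermediate value theorem plus strict monotonicity past $z_c$ yields a \emph{unique} positive zero $z_*$, and automatically $z_*>z_c$ with $F(z)\leq 0$ on all of $[0,z_*]$ (since $F$ decreases from a negative value, dips to its minimum at $z_c$, then rises to $0$ at $z_*$).

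For (iii), set $M:=\max\{(2c_1)^{1/(r-s)},(2c_2)^{1/r}\}$. By the definition of $M$,
\[
M^{r-s}\geq 2c_1 \quad\text{and}\quad M^r\geq 2c_2,
\]
so that $c_1 M^s\leq \tfrac12 M^r$ and $c_2\leq \tfrac12 M^r$. Substituting,
\[
F(M)=M^r-c_1 M^s-c_2\geq M^r-\tfrac12 M^r-\tfrac12 M^r=0.
\]
Since $F$ is strictly increasing on $[z_c,\infty)$ and vanishes precisely at $z_*$ there, the inequality $F(M)\geq 0$ forces $M\geq z_*$, which is the desired bound.

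I do not foresee a genuine obstacle here; the statement is a soft calculus exercise. The only place requiring care is the quantitative step: one must use \emph{both} pieces defining $M$ simultaneously, splitting the mass of $M^r$ evenly between the two negative terms of $F(M)$, which is precisely why the factor $2$ appears inside both candidates in the max. A minor secondary check is that our small-step assumption on $h$ in the broader flocking argument (where this lemma will be applied with $z=|\hat x|$) keeps the exponents $r,s$ and coefficients $c_1,c_2$ in the regime $r>s>0$, $c_1,c_2>0$, so that the lemma is actually applicable; but that belongs to the calling proof, not to the proof of the lemma itself.
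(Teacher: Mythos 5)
Your proof is correct: the monotonicity analysis of $F'(z)=z^{s-1}(rz^{r-s}-c_1s)$ together with $F(0)=-c_2<0$ and $F(z)\to\infty$ gives the unique positive zero and the sign condition on $[0,z_*]$, and evaluating $F$ at $M=\max\{(2c_1)^{1/(r-s)},(2c_2)^{1/r}\}$ with the even split of $M^r$ between the two negative terms gives $F(M)\geq 0$, hence $M\geq z_*$ since $F<0$ on $[0,z_*)$. The paper itself quotes this lemma from Cucker--Smale without reproducing a proof, and your argument is exactly the standard one from that source, so there is no divergence to report and no gap.
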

We are now ready to present our main result   on the Cucker-Smale flocking with alternating leaders.

\begin{theorem}
\label{thmconsensus}  
Let the discrete-time Cucker-Smale model \eqref{C-S1} be under   rooted leadership  with alternating leaders.
  Assume that the time step $h$
fulfills (\ref{h}), and one of the following three hypothesis holds:
\begin{enumerate}
\item $s<1;$ 
\item $s=1$ 
 and \[ |v(0)|<\frac{h^{(N-1)^2-1}}{2\sqrt2(N-1)^2\lambda};\] 
\item $s>1$, and
\begin{eqnarray} \label{assumption1}&&\left(\frac{1}{ {a}}\right)^{\frac{1}{s-1}}
\left[\left(\frac{1}{s}\right)^{\frac{1}{s-1}}-
\left(\frac{1}{s}\right)^{\frac{s}{s-1}}\right]- {b}>
\frac{8 \lambda^2| v(0)|^2s^{\frac{1}{s-1}}}{N^2}
 {a}^{\frac{1}{s-1}}
+\frac{4\sqrt{2}\lambda| v(0)|}{N}. \end{eqnarray}
 Here, \begin{equation}\label{abs}a=2\sqrt{2}\lambda h^{-(N-1)^2+1}
(N-1)^2| v(0)|, \quad
b=1+\sqrt{2}|\hat x(0)|, \quad   s= 2\beta (N-1)^2.
\end{equation} 
\end{enumerate} Then the system \eqref{C-S1} or \eqref{reducedflock} has a time-asymptotic flocking:
\begin{enumerate}  \item there exists a constant $B_0>0$ such that $|\hat x(t)|\leq B_0, ~ \forall\,t\in \mathbb{N};$  \item $\hat v(t)$ exponentially converges
to zero as $t\rightarrow \infty$. \end{enumerate} Moreover, there exists $\hat{x}^\infty\in
{(\mathbb{R}^3)}^{N-1}$ such that $\hat x(t)\longrightarrow \hat{x}^\infty$ as
$t\rightarrow \infty.$
\end{theorem}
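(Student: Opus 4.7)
The plan is a classical self-bounding bootstrap in the spirit of Cucker and Smale, with the twist that the a priori velocity estimate is now furnished by Proposition \ref{propestimate1} rather than by conservation of momentum. Define
\[
T^* := \sup\{\, T \in \mathbb N : |\hat x(t)| \leq B \text{ for all } 0 \leq t \leq T \,\},
\]
where $B > 0$ is a parameter to be fixed algebraically below. On the interval $[0, T^*]$ the assumption $|\hat x(t)| \leq B$ is in force, so Proposition \ref{propestimate1} delivers geometric decay
\[
|\hat v(t)|_\infty \leq 2\, \mu^{\big[\frac{t+1}{(N-1)^2}\big]} |v(0)|_\infty, \quad \mu := 1 - (hR)^{(N-1)^2}, \quad R = (1+2B^2)^{-\beta}.
\]

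Next I would integrate the kinematic update $\hat x(t+1) = \hat x(t) + h \hat v(t)$ along $[0, T^*]$. Passing from the $\infty$-norm to the Euclidean norm via the comparison constant $\lambda$ in (\ref{lambda}), grouping the sum in blocks of length $(N-1)^2$ so that $\sum_k \mu^{[(k+1)/(N-1)^2]} \leq (N-1)^2/(1-\mu) = (N-1)^2 (hR)^{-(N-1)^2}$, and bounding $|v(0)|_\infty \leq |v(0)|$, one obtains
\[
|\hat x(t)| \leq |\hat x(0)| + 2\lambda (N-1)^2 h^{1-(N-1)^2} |v(0)| (1+2B^2)^{\beta(N-1)^2}, \quad 0 \leq t \leq T^*.
\]
Using the majorization $1+2B^2 \leq (1+\sqrt{2}B)^2$ and substituting $z := 1 + \sqrt 2 B$, the required self-bound $|\hat x(t)| \leq B$ reduces, after multiplying through by $\sqrt 2$, to the single-variable algebraic inequality
\[
z - a z^s - b \geq 0,
\]
with $a$, $b$, $s$ exactly as in (\ref{abs}).

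The three hypotheses of the theorem correspond precisely to when this inequality admits a positive root. When $s < 1$, the left-hand side blows up as $z \to \infty$, so a solution exists unconditionally. When $s = 1$, the inequality collapses to $(1-a)z \geq b$, which is solvable iff $a < 1$; this is precisely the smallness condition on $|v(0)|$ in hypothesis (2). When $s > 1$, the concave envelope $z - a z^s$ is maximized at $z_\star = (as)^{-1/(s-1)}$ with value $(1/a)^{1/(s-1)}[(1/s)^{1/(s-1)} - (1/s)^{s/(s-1)}]$, and (\ref{assumption1}) says exactly that this maximum exceeds $b$ with a quantitative margin. In each case Lemma \ref{selfbdd} produces a positive root $z_*$, and I would then set $B := (z_* - 1)/\sqrt 2$ (or slightly larger, see below) and aim to show $T^* = \infty$ by contradiction.

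The main obstacle, and the source of the extra $|v(0)|^2/N^2$ and $|v(0)|/N$ summands on the right-hand side of (\ref{assumption1}), is the one-step jump across $T^*$: even if the continuous bound $|\hat x(t)| \leq B$ holds strictly on $[0, T^*]$, the single update $|\hat x(T^*+1)| \leq |\hat x(T^*)| + h|\hat v(T^*)|$ can still overshoot $B$. Closing the bootstrap therefore requires choosing $B$ strictly above the minimal root of $z - a z^s - b = 0$, with the buffer absorbing the residual oscillation $h\lambda \cdot 2 \mu^{[\cdot]} |v(0)|$ together with a factor coming from the comparison between the one-coordinate jump and the full $(N-1)$-dimensional vector norm; this is exactly what the two extra terms in (\ref{assumption1}) are calibrated for, and produces the strict inequality needed to contradict the maximality of $T^*$. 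Once the bootstrap closes, part (1) of the flocking conclusion holds with $B_0 := B$, part (2) is immediate from Proposition \ref{propestimate1}, and the summability of $\hat v(t)$ makes $\{\hat x(t)\}$ a Cauchy sequence in the Euclidean norm, yielding the limit $\hat x^\infty$.
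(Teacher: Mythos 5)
Your overall strategy --- an a priori velocity decay from Proposition \ref{propestimate1} under a provisional position bound, integration of the kinematic relation \eqref{eqhatx}, reduction to the scalar inequality $z-az^{s}-b\geq 0$ with $a,b,s$ as in \eqref{abs}, and Lemma \ref{selfbdd} for the cases $s<1$ and $s=1$ --- is exactly the paper's. Where you genuinely diverge is in the case $s>1$ and in the bookkeeping of the bootstrap. The paper does not fix a target bound $B$ in advance: it sets $|\hat x_*|=\max_{0\leq t\leq T}|\hat x(t)|$, derives the unconditional inequality $F(Z)\leq 0$ for $Z=(1+2|\hat x_*|^2)^{1/2}$ (this is \eqref{selfbd}), and for $s>1$ must then exclude the upper branch $Z\geq z_u$ of the set $\{F\leq 0\}$. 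It does so by showing that the nondecreasing quantity $Z(t)$ cannot jump over the gap $(z_\ell,z_u)$ in one time step: the increment $|\hat x(t_0)|^2-|\hat x(t_0-1)|^2$ is bounded above by $4h^2\lambda^2|v(0)|^2+4h\lambda|v(0)|\bigl(\tfrac{z_*^2-1}{2}\bigr)^{1/2}$ and below by $\tfrac12 z_*F(z_*)$ via the mean value theorem, and \eqref{assumption1} makes these incompatible. That is where the two extra terms on the right of \eqref{assumption1} actually come from --- the quadratic expansion of the squared-norm increment together with $h<\tfrac{1}{N+1}$ --- not, as you suggest, from a comparison between a one-coordinate jump and the $(N-1)$-dimensional vector norm. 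Your continuation formulation with $T^*=\sup\{T:\ |\hat x(t)|\leq B,\ t\leq T\}$ actually buys you something here: the integrated bound for $|\hat x(T^*+1)|$ uses only the velocities $\hat v(\tau)$ with $\tau\leq T^*$, all of which are controlled by the hypothesis on $[0,T^*]$, so the position estimate extends one step past $T^*$ automatically; taking $1+\sqrt2 B=z_*$, where $F$ is strictly positive, already gives $|\hat x(T^*+1)|<B$ and the desired contradiction with no buffer at all. Thus the ``one-step overshoot'' you single out as the main obstacle is not an obstacle in your formulation, and your route closes case (3) under the weaker condition $F(z_*)>0$, which is implied by \eqref{assumption1} since its right-hand side is positive. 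Both routes prove the theorem; yours is cleaner for case (3), while the paper's jump argument is what explains the precise quantitative form of \eqref{assumption1}.
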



%


\begin{proof}
Fix a discrete-time mark $T\in \mathbb{N}$ and define
\begin{equation}\label{nstar}
|\hat x_*|=\max\limits_{0\leq t\leq T} |\hat x(t)|, \,\,\,\,\,\,
 T_*\in \mathrm{argmax}_{0\leq t\leq T}|\hat x(t)|.
\end{equation}
  Then by Proposition \ref{propestimate1}, the estimate \eqref{eqprop1} holds for newly defined
 $R:=(1+2{|\hat x_*|}^2)^{-\beta}$, as long as we restrict  $t$
 within $[0,T]$. That is,
 \[
|\hat v(t)|_\infty\leq  2\left(1-(hR)^{(N-1)^2}\right)^{\big[\frac{t+1}{(N-1)^2}\big]} |v(0)|_\infty, \quad t\in [0,T].
\]
   By \eqref{lambda} we  find
\begin{align*}|\hat v(t)|\leq \lambda|\hat v(t)|_\infty&\leq   2\lambda\left(1-(hR)^{(N-1)^2}\right)^{\big[\frac{t+1}{(N-1)^2}\big]} |v(0)|_\infty\\&\leq 2\lambda\left(1-(hR)^{(N-1)^2}\right)^{\big[\frac{t+1}{(N-1)^2}\big]} |v(0)| ,\,\,\,\,\,\,t\in[0,T].\end{align*}
By the dynamics of referenced position $\hat x(t)$, i.e., \eqref{eqhatx}, for $t\in[0,T],$
we have
\begin{displaymath}\begin{array}{ll}
|\hat x(t)|&\leq |\hat x(0)| +h\sum\limits_{\tau=0}^{t-1}|\hat v(\tau)|\\
&\leq |\hat x(0)| +h\sum\limits_{\tau=0}^{t-1} 2\lambda\big(1-(hR)^{(N-1)^2}\big)^{\big[\frac{\tau+1}{(N-1)^2}\big]} |  v(0)| \\
&\leq |\hat x(0)|+2h \lambda |  v(0)| \sum\limits_{\tau=0}^{\infty}
\big(1-(hR)^{(N-1)^2}\big)^{\big[\frac{\tau+1}{(N-1)^2}\big]}\\
 &\leq |\hat x(0)|+2h \lambda
| v(0)|(N-1)^2 \sum\limits_{\tau=0}^{\infty} \left(1-(hR)^{(N-1)^2}\right)^\tau\\
&\leq |\hat x(0)|+2h \lambda |  v(0)|(N-1)^2 (hR)^{-(N-1)^2}\\
&= |\hat x(0)|+ \frac{\sqrt 2}{2}a(1+2|\hat x_*|^2)^{\beta (N-1)^2}.
\end{array}\end{displaymath}
 \\In particular, for $t=T_*,$ we have
\begin{displaymath}|\hat x_*|\leq |\hat x(0)| +\frac{\sqrt{2}}{2}
a(1+2|\hat x_*|^2)^{\beta (N-1)^2}.\end{displaymath}
Let $Z:=(1+2|\hat x_*|^2)^{\frac{1}{2}},$ then the above relation and \eqref{abs} give
\begin{equation}\label{selfbd}
Z\leq 1+\sqrt{2}|\hat x_*|\leq a Z^{2\beta (N-1)^2}+ b.
\end{equation}
In order to apply Lemma \ref{selfbdd} we define $F(z)$ as follows:  $$F(z)=z-az^{s}-b, \qquad s=2\beta (N-1)^2.$$
(1) Assume $s<1$. The relation \eqref{selfbd} says that
$F(Z)\leq 0$. By appealing to Lemma \ref{selfbdd} we have $Z\leq U_0$ with
$U_0\leq \max\{(2 {a})^{\frac{1}{1-s}}, 2 {b}\}$, and
then $|\hat x_*|\leq
\big(\frac{{U_0}^2-1}{2}\big)^{\frac{1}{2}}$. Note that
$U_0$  depends only on $ {a}$ and $ {b}$, which are
independent of the pre-assigned time $T$. Therefore, the bound
$|\hat x(T)|\leq |\hat x(T_*)|\leq B_0:=
\big(\frac{{U_0}^2-1}{2}\big)^{\frac{1}{2}}$ is uniform
for all $T=0, 1, 2,\dots$. That is, $|\hat x(t)|\leq B_0$ for all $t\in
\mathbb{N}$.

 Now, we choose
$B:=B_0$, 
and  $R:=(1+2B_0^2)^{-\beta}$ and
 use Proposition \ref{propestimate1}  for any time $t\in \mathbb N$ to find
  that $\hat v(t)$ exponentially converges to $0$.   Finally, for
all $t_2> t_1$, we have
\begin{displaymath}\begin{array}{ll}
|\hat x(t_2)-\hat x(t_1)|&\leq \sum\limits_{\tau=t_1}^{t_2-1}|\hat x(\tau+1)-\hat x(\tau)|\leq
h \sum\limits_{\tau=t_1}^{t_2-1}|\hat v(\tau)|\\&\leq
2h\lambda| v(0)| \sum\limits_{\tau=t_1}^{t_2-1} \Big(1-(hR)^{(N-1)^2}\Big)^{\big[\frac{\tau+1}{(N-1)^2}\big]}\\&\leq 2h\lambda| v(0)|
\sum\limits_{\tau=t_1}^{\infty}\Big(1-(hR)^{(N-1)^2}\Big)^{\big[\frac{\tau+1}{(N-1)^2}\big]}\\
&\leq 2h^{1-{(N-1)^2}}R^{-{(N-1)^2}}\lambda {(N-1)^2}|  v(0)|
\Big(1-(hR)^{{(N-1)^2}} \Big)^{[\frac{t_1}{{(N-1)^2}}]}.
\end{array}\end{displaymath}
Note that the right-hand side tends to zero as $t_1\to +\infty$ and is independent of $t_2$, by Cauchy principle we
deduce that there exists some $\hat{x}^\infty\in {(\mathbb R^3)}^{N-1}$ such that
$\hat x(t)\rightarrow \hat{x}^\infty$ as $t\to +\infty$.

\noindent (2) Assume $s=1$. Then (\ref{selfbd}) becomes
$$Z\leq  {a} Z+ {b},$$
which implies that $$|\hat x_*|\leq
\frac{\sqrt2}{2}\Big(\big(\frac{b}{1-a}\big)^2-1\Big)^{1/2}.$$ The
right-hand side  is positive by our hypothesis and thus gives a uniform
bound for $\hat x(t)$.  For the remaining parts we proceed as in case (1).
\begin{figure}[h]
\begin{center}
\includegraphics[width=3.5in]{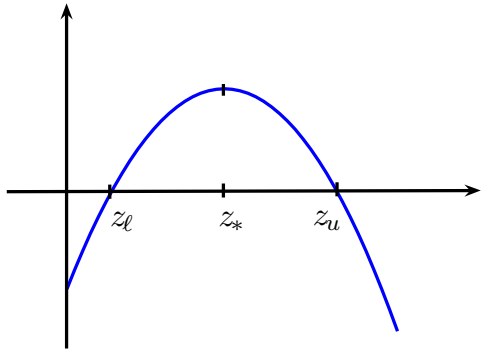}
\end{center}
\caption{Shape of $F$.} \label{shape}
\end{figure}

(3) Assume that $s>1$. The derivative $F'(z)=1-s {a}z^{s-1}$
has a unique zero at $z_*=(\frac{1}{s {a}})^{\frac{1}{s-1}}$
and
\begin{displaymath}\begin{array}{ll}F(z_*)=&\left(\frac{1}{s {a}}\right)^{\frac{1}{s-1}}- {a}
\left(\frac{1}{s {a}}\right)^{\frac{s}{s-1}}- {b}> 0,
\\\end{array}\end{displaymath}
by our hypothesis (\ref{assumption1}). Since $F(0)=- {b}<0$
and $F(z)\rightarrow -\infty$ as $z\rightarrow +\infty$ we see that
the shape of $F$ is as  in Fig.\ref{shape}. For $t\in \mathbb{N}$,
let $Z(t)=(1+2|\hat x(t_*)|^2)^{\frac{1}{2}},$ where  $t_*$ is
defined as in (\ref{nstar}), i.e., $t_*\in \mathrm{argmax}_{0\leq \tau\leq t}|\hat x(\tau)|$. For $t=0$ we must have $t_*=0$ and
\begin{displaymath}\begin{array}{ll}Z(0)=
(1+2|\hat x(0)|^2)^{\frac{1}{2}}\leq
1+\sqrt{2}|\hat x(0)|= {b}\leq
(\frac{1}{s {a}})^{\frac{1}{s-1}}=z_*.\end{array}\end{displaymath}
This implies that $Z(0)\leq z_\ell.$ Assume that there exists $t\in
\mathbb{N}$ such that $Z(t)\geq z_u$ and let $t_0$ be the first
such $t$. Then ${t_0}_*=t_0 \geq 1$ and for all $t<t_0,$
\begin{displaymath}\begin{array}{ll}
(1+2|\hat x(t)|^2)^{\frac{1}{2}}\leq z_\ell \leq
z_*,\end{array}\end{displaymath} that is,
\begin{displaymath}\begin{array}{ll}
|\hat x(t)|\leq
\big(\frac{{z_*}^2-1}{2}\big)^{\frac{1}{2}}.\end{array}\end{displaymath}
In particular,
\begin{displaymath}\begin{array}{ll} |\hat x(t_0-1)|^2\leq
\frac{{z_\ell}^2-1}{2}\leq
\frac{{z_*}^2-1}{2}.\end{array}\end{displaymath}
 On the other hand, $Z(t_0)\geq z_u$ gives
\begin{displaymath}\begin{array}{ll} |\hat x(t_0)|^2\geq
\frac{{z_u}^2-1}{2}\geq
\frac{{z_*}^2-1}{2}.\end{array}\end{displaymath} Thus we have
\begin{equation}\label{ap1} |\hat x(t_0)|^2-|\hat x(t_0-1)|^2\geq
\frac{{z_*}^2-{z_\ell}^2}{2}\geq
\frac{1}{2}(z_*-z_\ell)z_*.\end{equation}
 By the
intermediate value theorem, there is a $\xi\in [z_\ell, z_*]$ such
that $F(z_*)=F'(\xi)(z_*-z_\ell).$ Note that $F'(\xi)=1-s {a}\xi^{s-1} \in [0,1]$,
therefore,
\begin{displaymath}\begin{array}{ll}
z_*-z_\ell\geq F(z_*).\end{array}\end{displaymath}
We combine  (\ref{ap1}) to obtain
\begin{equation}\label{ap2}
|\hat x(t_0)|^2-|\hat x(t_0-1)|^2\geq \frac{1}{2}z_*F(z_*).\end{equation}
However, we have
\begin{displaymath} |\hat x(t_0)|-|\hat x(t_0-1)|\leq
|\hat x(t_0)-\hat x(t_0-1)|=h |\hat v(t_0-1)|\leq 2h \lambda|  v(0)|_\infty\leq 2h \lambda|  v(0)|.\end{displaymath}
Therefore,
\begin{displaymath}\begin{array}{ll}
|\hat x(t_0)|^2-|\hat x(t_0-1)|^2&=\left(|\hat x(t_0)|-|\hat x(t_0-1)|\right)^2+
2\left(|\hat x(t_0)|-|\hat x(t_0-1)|\right)|\hat x(t_0-1)|\\&\leq
4h^2\lambda^2|  v(0)|^2+4h\lambda| v(0)||\hat x(t_0-1)|\\&\leq
4h^2\lambda^2| v(0)|^2+4h\lambda| v(0)|\left(\frac{{z_*}^2-1}{2}\right)^{\frac{1}{2}}.
\end{array}\end{displaymath}
We combine this inequality with (\ref{ap2}) to obtain
\begin{displaymath}\begin{array}{ll}
z_*F(z_*) \leq
8h^2\lambda^2| v(0)|^2+8h\lambda| v(0)|\left(\frac{{z_*}^2-1}{2}\right)^{\frac{1}{2}}\leq
8h^2\lambda^2| v(0)|^2+4\sqrt2 h\lambda| v(0)|z_*,
\end{array}\end{displaymath} thus, we have
\begin{eqnarray} F(z_*)&&=\left(\frac{1}{ {a}}\right)^{\frac{1}{s-1}}\left[\left(\frac{1}{s}\right)^{\frac{1}{s-1}}-
\left(\frac{1}{s}\right)^{\frac{s}{s-1}}\right]- {b}\nonumber\\&&\leq
8h^2\lambda^2|  v(0)|^2
\left(s {a}\right)^{\frac{1}{s-1}}+4\sqrt{2}h\lambda| v(0)|\nonumber\\&&\leq
\frac{8 \lambda^2| v(0)|^2s^{\frac{1}{s-1}}}{N^2}
 {a}^{\frac{1}{s-1}}
+\frac{4\sqrt{2}\lambda| v(0)|}{N},\nonumber\end{eqnarray}
where (\ref{h}) is
used. This contradicts to our hypothesis (\ref{assumption1}).  So we
conclude that, for all $T\in \mathbb{N}$, $Z(t)\leq z_\ell$ and
then $|\hat x(t)|\leq
\big(\frac{{z_*}^2-1}{2}\big)^{\frac{1}{2}},$ which is a
uniform bound for $\hat x(t)$. Again we can proceed as in case (1) to complete the proof.
\end{proof}

\section{Simulations}
In this section, we present some numerical simulations to show the flocking behavior in a C-S model with alternating leaders.  For these simulations, we choose a flock consisting of three agents, labeled by $\mathcal V=\{1,2,3\}$, and we take the parameters \[h=0.2, \quad\mathrm{and}\quad \beta=\frac{1}{4}.\]  We consider the switching in three interaction topologies  described by   graphs
$$\mathcal G_1=\big(\mathcal V, \{(1,2), (1,3)\}\big), \,\, \mathcal G_2=\big(\mathcal V,  \{(2,1), (2,3) \}\big),   \,\,\mathcal G_3=\big(\mathcal V,  \{(3,1), (3,2) \}\big).$$
This means that in $\mathcal G_i$, the agent $i$ acts as the leader and there are information flows from $i$ to the other two agents. We choose the switching signal $\sigma(t)$ as \begin{equation*}\sigma(t)=(t,~\mod 3)+1, \qquad t \,\,\textrm{is   the  discrete time mark}.\end{equation*}
In other words, the  sequence of neighbor graphs is given by \begin{equation}\label{switchsig}\mathcal G_1 \to \mathcal G_2\to \mathcal G_3\to \mathcal G_1 \to \mathcal G_2\to\cdots,\end{equation}
and at each time step, there is a switching in the neighbor graphs; that is, the dwelling time for each active graph is $T_d=h=0.2$.
For the initial state, we choose the initial position $x(0)\in (\mathbb R^3)^3$ with nine coordinates randomly distributed in an  interval of length 10; while the coordinates of initial velocities were  randomly chosen from an interval of unit length. In Figure \ref{figsimul} we show the evolution of relative positions  $\hat x_i=x_i-x_3$ and the evolution of the norm of total relative velocity  $(|\hat v_1|^2+|\hat v_2|^2)^{\frac{1}{2}}$. These simulations show the asymptotic (exponentially fast) flocking behavior of the C-S model with alternating leaders.

In Figure \ref{figsimul2} (a), we  also exhibit the evolution of  velocities $v_1$, $v_2$, and  $v_3$ under the switching signal $\sigma$.  Here, we use different colors to denote different  coordinates, and use different markers to describe different agents.
 Figure \ref{figsimul2} (a) indicates that each coordinate of their velocities asymptotically attains an alignment, i.e., they exhibit a velocity consensus asymptotically.   To compare the relaxation process of the flocking state under different switching topology, we did some simulations with different switching signal.  We chose the similar signal with a sequence \eqref{switchsig}  but with a different dwelling time $T_d$ for each neighbor graph.  Precisely, we take $T_d=1$ seconds, $T_d=3 $ seconds and  $T_d=7 $ seconds for the simulations exhibited in Figure \ref{figsimul2} (b), (c) and (d), respectively.  We observe that in any case the velocities   attain a near alignment after  $t_0=9$ and keep like this. This means that the flocking state is robust about the alternating leaders.  Moreover, we observe that the switching signal before the alignment influence the asymptotic value a lot.  Since the leader in  the first active interaction topology is agent 1, if the dwelling time is longer, the asymptotic velocity is more close to the initial value of the agent 1.

\begin{figure}[h]
\centering
\subfigure[The evolution of each coordinate  of relative positions  $\hat x^k_i=x^k_i-x^k_3,\,\, i=1,2, \,k=1,2,3. $]
{\includegraphics[width=7cm]{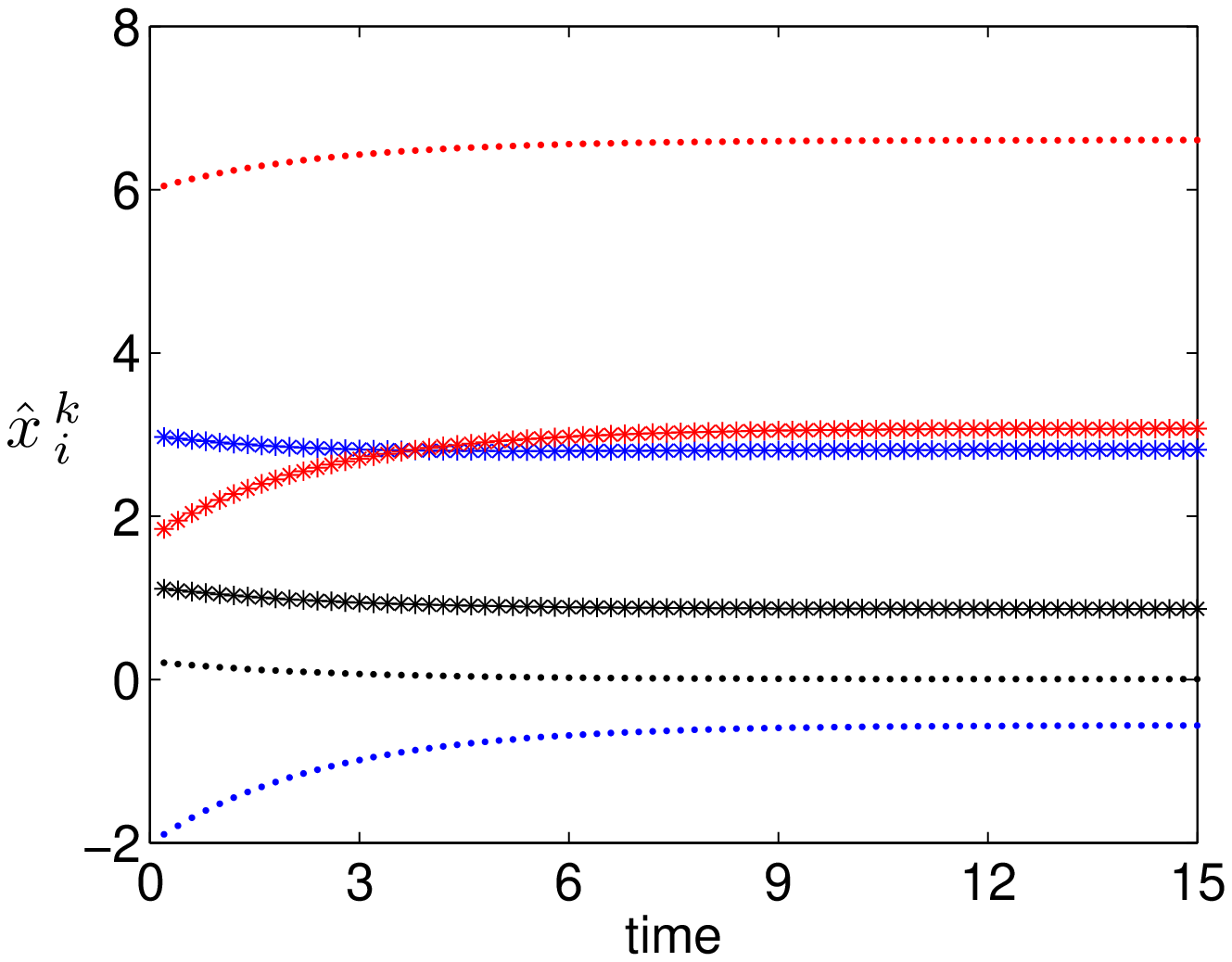}}
\subfigure[The evolution of  $(|\hat v_1|^2+|\hat v_2|^2)^\frac{1}{2}$ for the  relative velocities $\hat v_i=v_i-v_3, \,\, i=1,2.$]
{\includegraphics[width=8cm]{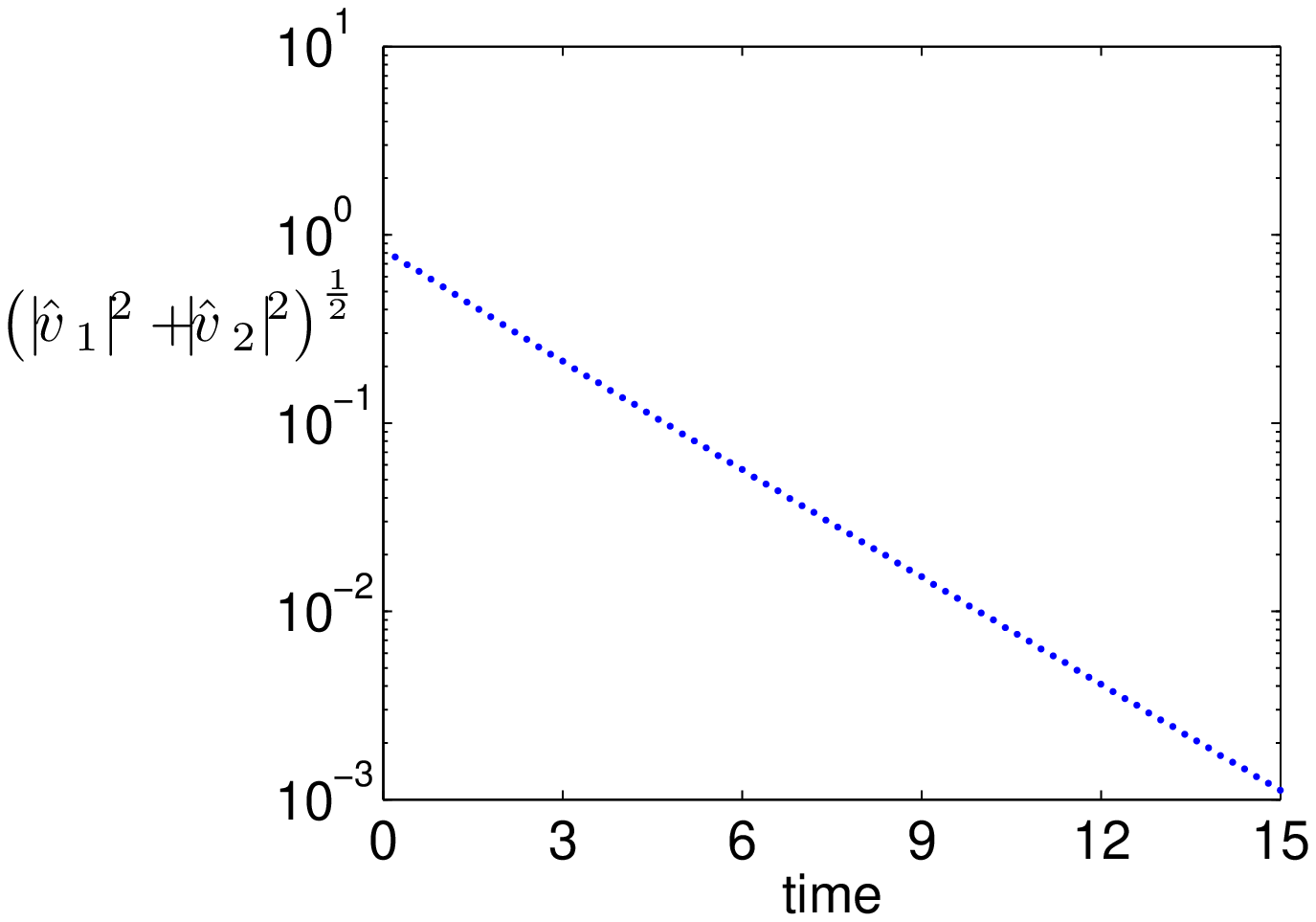}}
\caption{The emergence of flocking in a 3-agent C-S model with alternating leaders}\label{figsimul}
\end{figure}

\begin{figure}[h]
\centering
\subfigure[$T_d=h=0.2$]
{\includegraphics[width=7cm]{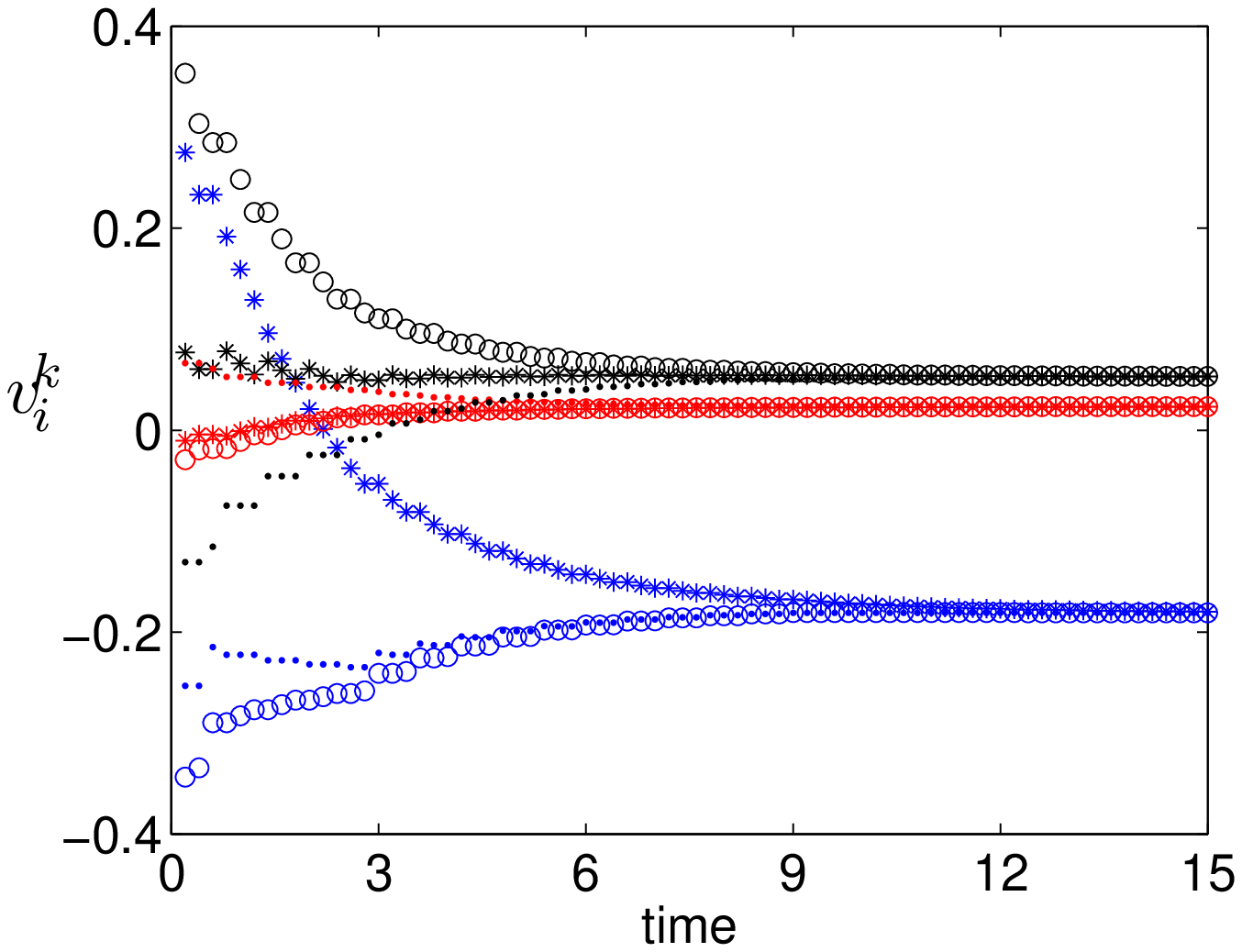}}
\subfigure[$T_d=1$]
{\includegraphics[width=7cm]{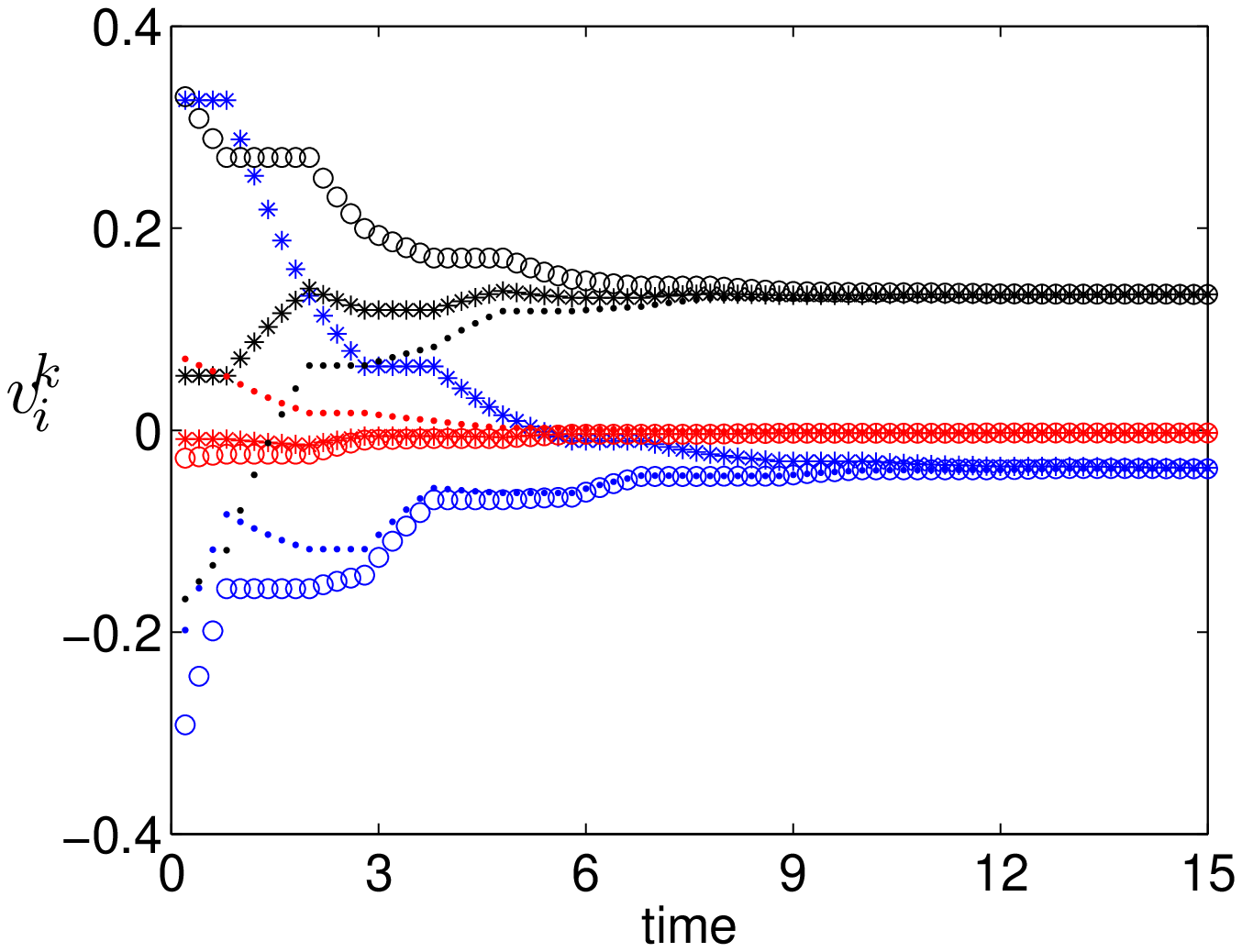}}
\subfigure[$T_d=3$]
{\includegraphics[width=7cm]{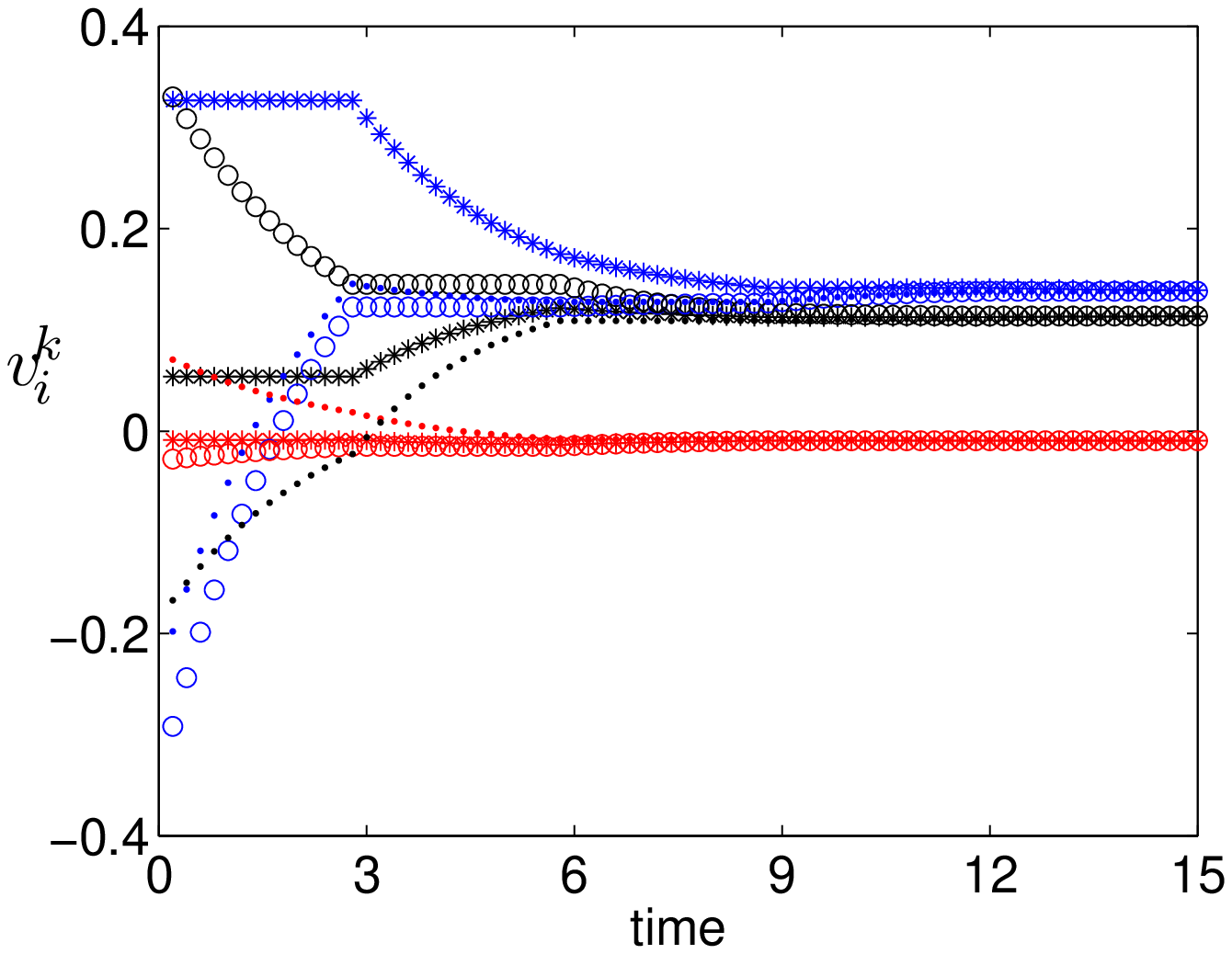}}
\subfigure[$T_d=7$]
{\includegraphics[width=7cm]{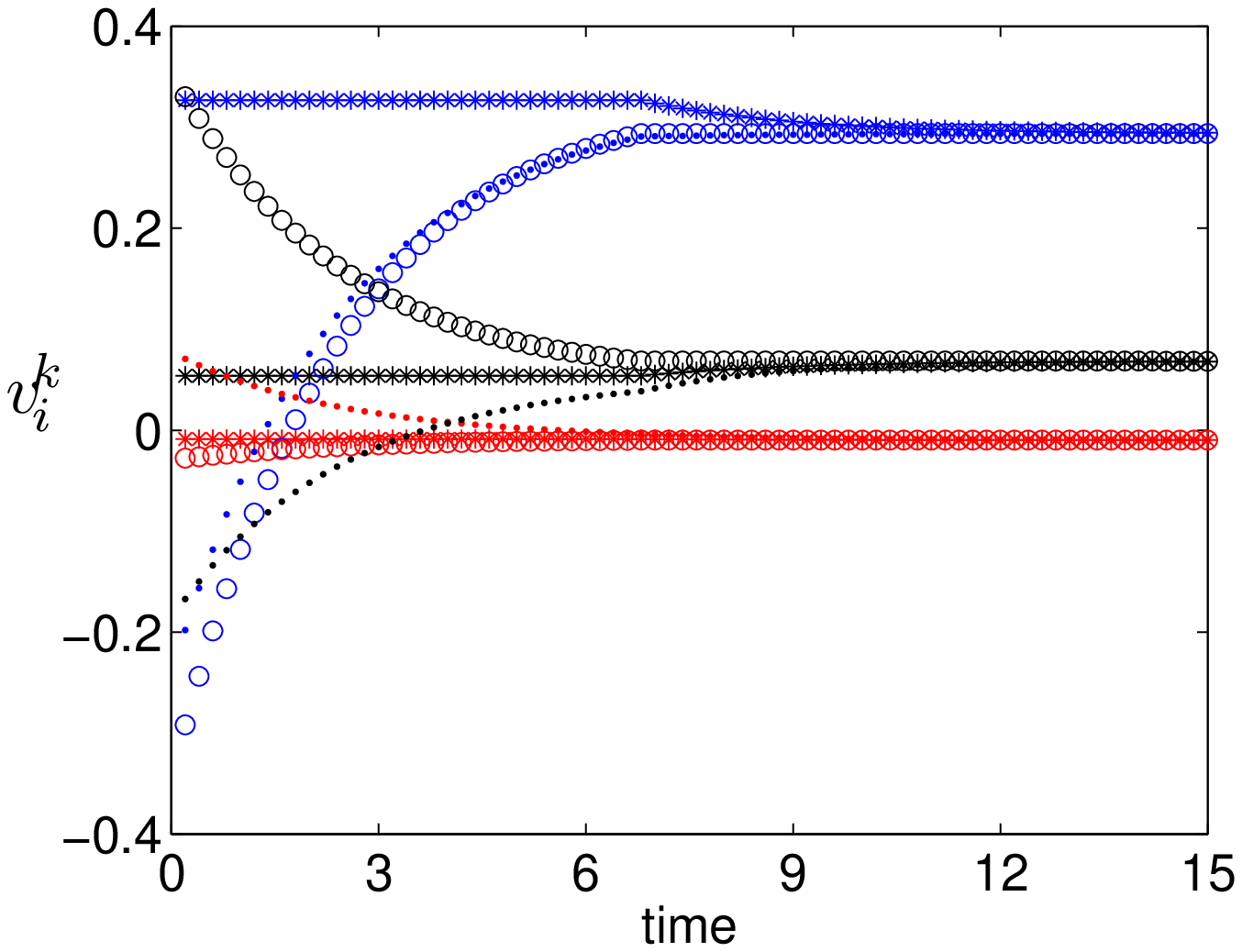}}
\caption{The  velocity alignment in a 3-agent C-S model with alternating leaders. The lines in blue, red and black indicate  $x$, $y$ and $z$ coordinates, respectively.  The lines marked by star, circle and dot indicate the agent 1, 2 and 3, respectively. }\label{figsimul2}
\end{figure}

\section{Conclusion}
We  studied the Cucker-Smale flocking under the rooted leadership with alternating leaders. This dynamically changing interaction topology is motivated by the ubiquitous phenomena in our nature such as the alternating leaders in migratory birds on the long journey,  the changing political leaders in human societies, etc.
Our study showed that the flocking behavior can occur for such a dynamically changed leadership structure under some sufficient conditions on the initial configurations depending on the decay rate of communications and the size of flocking.


\ifCLASSOPTIONcaptionsoff
  \newpage
\fi

\end{document}